\newtheorem{theorem}{Theorem}
\newtheorem{problem}[theorem]{Problem}
\newtheorem{obstacle}[theorem]{Obstacle}
\newtheorem{definition}[theorem]{Definition}
\newtheorem{lemma}[theorem]{Lemma}
\newtheorem{corollary}[theorem]{Corollary}
\newtheorem{fact}[theorem]{Fact}
\DeclareMathOperator{\poly}{poly}
\DeclareMathOperator{\total}{total}
\title{Bicriteria approximation for minimum dilation graph augmentation}
\author{Kevin Buchin, Maike Buchin, Joachim Gudmundsson, Sampson Wong}
\date{}
\begin{document}

\maketitle

\begin{abstract}
Spanner constructions focus on the initial design of the network. However, networks tend to improve over time. In this paper, we focus on the improvement step. Given a graph and a budget~$k$, which~$k$ edges do we add to the graph to minimise its dilation? Gudmundsson and Wong~[TALG'22] provided the first positive result for this problem, but their approximation factor is linear in~$k$. 

Our main result is a $(2 \sqrt[r]{2} \ k^{1/r},2r)$-bicriteria approximation that runs in $O(n^3 \log n)$ time, for all $r \geq 1$. In other words, if $t^*$ is the minimum dilation after adding any $k$ edges to a graph, then our algorithm adds~$O(k^{1+1/r})$ edges to the graph to obtain a dilation of~$2rt^*$. Moreover, our analysis of the algorithm is tight under the Erd\H{o}s girth conjecture. 
\end{abstract}

\section{Introduction}
\label{section:introduction}

Let~$G$ be a graph embedded in a metric space~$M$. Let $V(G)$, $E(G)$ be the vertices and edges of~$G$. For vertices $u,v \in V(G)$, define $d_M(u,v)$ to be the metric distance between points $u,v \in M$, and define~$d_G(u,v)$ to be the shortest path distance between vertices $u,v \in G$. The \emph{dilation} or \emph{stretch} of~$G$ is the minimum~$t \in \mathbb R$ so that for all $u,v \in V(G)$, we have $d_G(u,v) \leq t \cdot d_M(u,v)$. 

Dilation measures the quality of a network in applications such as transportation, energy, and communication. For now, we restrict our attention to the special case of low dilation trees.

\begin{problem}
\label{problem:tree_spanner}
Given a set of~$n$ points~$V$ embedded in a metric space~$M$, compute a spanning tree of~$V$ with minimum dilation.
\end{problem}

Problem~\ref{problem:tree_spanner} is known across the theory community, as either the minimum dilation spanning tree problem~\cite{DBLP:journals/comgeo/AronovBCGHSV08,DBLP:conf/cocoon/BrandtGRS15,DBLP:journals/comgeo/CheongHL08}, the tree spanner problem~\cite{DBLP:journals/siamdm/CaiC95,DBLP:journals/dam/FeketeK01,DBLP:journals/ipl/FominGL11} or the minimum maximum-stretch spanning tree problem~\cite{DBLP:journals/siamcomp/EmekP08,DBLP:journals/amc/LinL20,DBLP:conf/mfcs/Peleg02}. The problem is \mbox{NP-hard} even if $M$ is an unweighted graph metric~\cite{DBLP:journals/siamdm/CaiC95} or the Euclidean plane~\cite{DBLP:journals/comgeo/CheongHL08}. Problem~\ref{problem:tree_spanner} is closely related to tree embeddings of general metrics~\cite{DBLP:conf/soda/BadoiuIS07}, and has applications to communication networks and distributed systems~\cite{DBLP:conf/mfcs/Peleg02}. 

The approximability of Problem~\ref{problem:tree_spanner} is an open problem stated in several surveys and papers~\cite{DBLP:journals/comgeo/CheongHL08,DBLP:books/el/00/Eppstein00,DBLP:conf/mfcs/Peleg02}, and is a major obstacle towards constructing low dilation graphs with few edges~\cite{DBLP:journals/comgeo/AronovBCGHSV08,DBLP:journals/talg/GudmundssonW22}. The minimum spanning tree is an $O(n)$-approximation~\cite{DBLP:books/el/00/Eppstein00} for Problem~\ref{problem:tree_spanner}, but no better result is known. Only in the special case where $M$ is an unweighted graph is there an $O(\log n)$-approximation~\cite{DBLP:journals/siamcomp/EmekP08}.

\begin{obstacle}
\label{obstacle:tree_spanner}
Is there an $O(n^{1-\varepsilon})$-approximation algorithm for Problem~\ref{problem:tree_spanner}, for any $\varepsilon > 0$?
\end{obstacle}

If we no longer restrict ourselves to trees, we can shift our attention to spanners, which are low dilation sparse graphs. An advantage of spanners over minimum dilation trees is that spanners are not affected by Obstacle~\ref{obstacle:tree_spanner}. Spanners obtain significantly better dilation guarantees, at the cost of adding slightly more edges. The trade-off between sparsity and dilation in spanners has been studied extensively~\cite{DBLP:journals/dcg/AlthoferDDJS93,DBLP:conf/compgeom/DasHN93,DBLP:journals/siamcomp/FiltserS20,DBLP:conf/focs/LeS19}. For an overview of the rich history and multitude of applications of spanners, see the survey on graph spanners~\cite{DBLP:journals/csr/AhmedBSHJKS20} and the textbook on geometric spanners~\cite{DBLP:books/daglib/NarasimhanSmid}. 

Spanner constructions focus on the initial design of the network. However, networks tend to improve over time. In this paper, we focus on the improvement step. Given a graph and a budget~$k$, which~$k$ edges do we add to the graph to minimise its dilation? 

\begin{problem}
\label{problem:graph_augmentation}
Given a positive integer $k$ and a metric graph $G$, compute a set~$S$ of $k$~edges so that the dilation of the graph $G' = (V(G), E(G) \cup S)$ is minimised. Note that $S \subseteq V(G) \times V(G)$. 
\end{problem}

Narasimham and Smid~\cite{DBLP:books/daglib/NarasimhanSmid} stated Problem~\ref{problem:graph_augmentation} as one of twelve open problems in the final chapter of their reference textbook. For over a decade, the only positive results for Problem~\ref{problem:graph_augmentation} were for the special case where $k=1$~\cite{DBLP:journals/josis/AronovBBJJKLLSS11,DBLP:journals/siamcomp/FarshiGG08,DBLP:conf/isaac/LuoW08,DBLP:journals/comgeo/Wulff-Nilsen10}. In 2021, Gudmundsson and Wong~\cite{DBLP:journals/talg/GudmundssonW22} showed the first positive result for $k \geq 2$, by providing an~\mbox{$O(k)$-approximation} algorithm that runs in $O(n^3 \log n)$ time. A downside of~\cite{DBLP:journals/talg/GudmundssonW22} is that their approximation factor is linear in~$k$. However, since Problem~\ref{problem:tree_spanner} is a special case of Problem~\ref{problem:graph_augmentation}, Obstacle~\ref{obstacle:tree_spanner} applies to Problem~\ref{problem:graph_augmentation} as well.

\begin{obstacle}
\label{obstacle:graph_augmentation}
One cannot obtain an~\mbox{$O(k^{1-\varepsilon})$-approximation} algorithm for Problem~\ref{problem:graph_augmentation} for any $\varepsilon > 0$, without first resolving Obstacle~\ref{obstacle:tree_spanner}. 
\end{obstacle}

One way to circumvent Obstacle~\ref{obstacle:graph_augmentation} is to consider a bicriteria approximation. An advantage of a bicriteria approximation is that we can obtain significantly better dilation guarantees, at the cost of adding slightly more edges. 

The goal of our bicriteria problem is to investigate the trade-off between sparsity and dilation. We define the sparsity parameter~$f$ to be the number of edges added by our algorithm divided by~$k$. We define the dilation parameter~$g$ to be the dilation of our algorithm (which adds~$fk$ edges) divided by the dilation of the optimal solution (which adds~$k$ edges). 

\begin{problem}
\label{problem:bicriteria}
Given a positive integer~$k$, a metric graph~$G$, sparsity $f \in \mathbb R$ and dilation $g \in \mathbb R$, construct a set~$S$ of~$fk$ edges so that the dilation of the graph $G' = (V(G), E(G) \cup S)$ is at most~$gt^*$, where~$t^*$ is the minimum dilation in Problem~\ref{problem:graph_augmentation}. Note that~$S \subseteq V(G) \times V(G)$. 
\end{problem}

We define an~$(f,g)$-bicriteria approximation to be an algorithm for Problem~\ref{problem:bicriteria} that achieves sparsity~$f$ and dilation~$g$.

\subsection{Contributions}
\label{subsection:contributions}

Our main result is a \mbox{$(2 \sqrt[r]{2} \ k^{1/r},2(1+\delta)r)$}-bicriteria approximation for Problem~\ref{problem:bicriteria} that runs in $O(n^3 (\log n + \log \frac 1 \delta))$ time, for all~$r \geq 1$ and $\delta > 0$. In other words, if $t^*$ is the minimum dilation after adding any $k$ edges to a graph, then our algorithm adds~$O(k^{1+1/r})$ edges to the graph to obtain a dilation of~$2(1+\delta)rt^*$. Our dilation guarantees are significantly better than the previous best result~\cite{DBLP:journals/talg/GudmundssonW22}, at the cost of adding slightly more edges. For example, if $r = \log(2k)$ we obtain a~$(4,2(1+\delta) \log (2k))$-bicriteria approximation algorithm, which adds~$4k$ edges to the graph to obtain a dilation of $2 (1+\delta) \log(2k) t^*$. See Table~\ref{table:contributions}.

Our approach uses the greedy spanner construction. The greedy spanner is among the most extensively studied spanner constructions~\cite{DBLP:journals/dcg/AlthoferDDJS93,DBLP:conf/compgeom/DasHN93,DBLP:journals/siamcomp/FiltserS20,DBLP:conf/focs/LeS19}. Therefore, it is perhaps unsurprising that greedy spanner can be used for Problem~\ref{problem:bicriteria}. Nonetheless, we believe that our result shows the utility and versatility of the greedy spanner.

Our main technical contribution is our analysis of the greedy spanner. Our main insight is to construct an auxilliary graph, which we call the girth graph, and to argue that the approximation ratio is bounded by the length of the shortest cycle in the girth graph. Moreover, our analysis of the greedy spanner is tight, up to constant factors. In particular, assuming the Erd\H{o}s girth conjecture, there is a graph class for which our algorithm is an \mbox{$(\Omega(k^{1/r}), 2r+1)$-bicriteria} approximation.

Assuming W[1] $\neq$ FPT, we prove that one cannot obtain a~\mbox{$(h(k), 2 - \varepsilon)$-bicriteria} approximation, for any computable function~$h$ and for any $\varepsilon > 0$. Since one cannot approximate the dilation to within a factor of $(2-\varepsilon)$, The restriction $r \geq 1$ is essentially necessary in our main result. 

Finally, we use ideas from our hardness proof to provide a~\mbox{$(4k \log n, 1)$-bicriteria} approximation. 

Our results are summarised in Table~\ref{table:contributions}. For a technical overview of our results, see Section~\ref{section:technical_overview}. 

\renewcommand{\arraystretch}{1.4}
\setlength{\tabcolsep}{3mm}
\begin{table}[ht]
    \centering
    \begin{tabular}{||c|c|c||c||}
        \hline
        Sparsity ($f$)  & Dilation ($g$) & Complexity & Reference\\
        \hline
        \hline
        1 & $(1+\delta)(k+1)$ & $O(n^3 (\log n + \log \frac 1 \delta))$  
            & Gudmundsson and Wong~\cite{DBLP:journals/talg/GudmundssonW22} \\
        \hline
        $2 + \varepsilon$ & $O_\varepsilon((1+\delta)\log(k))$ & $O(n^3 (\log n + \log \frac 1 \delta))$ 
            & $r=O_\varepsilon(\log(k))$ in~Theorem~\ref{theorem:main_greedy_upper}\\
        \hline
        $4$ & $2(1+\delta)\log (2k)$ & $O(n^3 (\log n + \log \frac 1 \delta))$ 
            & $r=\log(2k)$ in~Theorem~\ref{theorem:main_greedy_upper}\\
        \hline
        $2^{1+\varepsilon} \ k^{\varepsilon}$ & 2$(1+\delta)\varepsilon^{-1}$ & $O(n^3 (\log n + \log \frac 1 \delta))$ 
            & $r=1/\varepsilon$ in~Theorem~\ref{theorem:main_greedy_upper}\\
        \hline
        $2\sqrt{2} \ \sqrt k$ & $4(1+\delta)$ & $O(n^3 (\log n + \log \frac 1 \delta))$ 
            & $r=2$ in~Theorem~\ref{theorem:main_greedy_upper}\\
        \hline
        $4k$ & $2(1+\delta)$ & $O(n^3 (\log n + \log \frac 1 \delta))$ 
            & $r=1$ in~Theorem~\ref{theorem:main_greedy_upper}\\
        \hline
        \multicolumn{3}{||c||} {Our analysis in Theorem~\ref{theorem:main_greedy_upper} is tight under EGC}
            & Theorem~\ref{theorem:main_greedy_lower}\\
        \hline
        $h(k)$ & $2 - \varepsilon$ & W[1]-hard
            & Theorem~\ref{theorem:main_set_cover_lower} \\
        \hline
        $4k \log n$ & 1 & $O(n^6 \log n)$ 
            & Theorem~\ref{theorem:main_set_cover_upper} \\
        \hline
    \end{tabular}
    \caption{The table shows the trade-off between sparsity~$f$ and dilation~$g$ in our bicriteria approximation algorithms for Problem~\ref{problem:bicriteria}. Note that EGC is the Erd\H{o}s girth conjecture, $h(\cdot)$ is any computable function, and $O_{\varepsilon}(\cdot)$ hides dependence on~$\varepsilon$.}
    \label{table:contributions}
\end{table}
\subsection{Related work}
\label{subsection:related_work}

Due to the difficult nature of Problem~\ref{problem:graph_augmentation}, most of the literature focuses on the special case where $k=1$. Farshi, Giannopoulos and Gudmundsson~\cite{DBLP:journals/siamcomp/FarshiGG08} provide an~$O(n^4)$ time algorithm, and an~$O(n^3)$ time \mbox{3-approximation} when~$k=1$. Wulff-Nilsen~\cite{DBLP:journals/comgeo/Wulff-Nilsen10} presents an~$O(n^3 \log n)$ time algorithm. Luo and Wulff-Nilsen~\cite{DBLP:conf/isaac/LuoW08} improves the space requirement to linear. Aronov et al.~\cite{DBLP:journals/josis/AronovBBJJKLLSS11} provide a nearly-linear time algorithm in the special case where the graph is a simple polygon and an interior point.

A variant of Problem~\ref{problem:graph_augmentation} is to add~$k$ edges to a graph to minimise the diameter instead of the dilation. Frati, Gaspers, Gudmundsson and Mathieson~\cite{DBLP:journals/algorithmica/FratiGGM15} provide a fixed parameter tractable 4-approximation for the problem. Bil{\`{o}}, Gual{\`{a}} and Proietti~\cite{DBLP:journals/tcs/BiloGP12} provide bicriteria approximability and inapproximability results. Several special cases have been studied. Demaine and Zadimoghaddam~\cite{DBLP:conf/swat/DemaineZ10} consider adding~$k$ edges of length~$\delta$, where~$\delta$ is small relative to the diameter.
Gro{\ss}e~et~al.~\cite{DBLP:journals/ijfcs/GrosseKSGS19} present nearly-linear time algorithms for adding one edge to either a path or a tree in order to minimise its diameter. Follow up papers improve the running time of the algorithm for paths~\cite{DBLP:journals/comgeo/Wang18} and for trees~\cite{DBLP:journals/tcs/Bilo22a,DBLP:journals/tcs/WangZ21}. Bil{\`{o}}, Gual{\`{a}}, Stefano Leucci and Sciarria~\cite{DBLP:conf/wads/BiloGLS23} extend the linear time algorithm to approximate the minimum diameter when $k>1$ edges are added to a tree.

Another variant is to add~$k$ edges to a graph to minimise the radius. Gudmundsson, Sha and Yao~\cite{DBLP:conf/isaac/GudmundssonSY21} provide a \mbox{3-approximation} for adding~$k$ edges to a graph to minimise its radius. The problem of adding one edge to minimise the radius of paths~\cite{DBLP:journals/comgeo/JohnsonW21,DBLP:journals/ijcga/wangzhao2020} and trees~\cite{DBLP:conf/wads/GudmundssonS21} has also been studied.

A problem closely related to Problem~\ref{problem:tree_spanner} is to compute minimum dilation graphs. In his Master's thesis, Mulzer~\cite{mulzer2004minimum} studies minimum dilation triangulations for the regular \mbox{$n$-gon}. Eppstein and Wortman~\cite{DBLP:conf/compgeom/EppsteinW05} provide a nearly-linear time algorithm to compute a minimum dilation star of a set of points. Giannopoulos, Knauer and Marx~\cite{giannopoulos2007minimum} prove that, given a set of points, it is \mbox{NP-hard} to compute a minimum dilation tour or a minimum dilation path. Aronov et al.~\cite{DBLP:journals/comgeo/AronovBCGHSV08} show that, given~$n$ points, one can construct a graph with~$n-1+k$ edges and dilation~$O(n/(k+1))$.

Our algorithm for Problem~\ref{problem:bicriteria} uses the greedy spanner, which is among the most extensively studied spanner constructions. In general metrics, the greedy \mbox{$(2k-1)$-spanner} has~$O(n^{1+1/k})$ edges~\cite{DBLP:journals/dcg/AlthoferDDJS93}. In $d$-dimensional Euclidean space, the greedy \mbox{$(1+\varepsilon)$-spanner} has~$O(n\varepsilon^{-d+1})$ edges~\cite{DBLP:conf/compgeom/DasHN93,DBLP:books/daglib/NarasimhanSmid}. The sparsity-dilation trade-off is (existentially) optimal in both cases~\cite{DBLP:journals/siamcomp/FiltserS20,DBLP:conf/focs/LeS19}. 

\section{Technical overview}
\label{section:technical_overview}

We divide our technical overview into six subsections. In Section~\ref{subsection:techview_previous}, we summarise the previous algorithm of Gudmundsson and Wong~\cite{DBLP:journals/talg/GudmundssonW22}. In Section~\ref{subsection:techview_greedy_upper}, we give an overview of our main result, that is, our \mbox{$(2 \sqrt[r]{2} \ k^{1/r},2r)$}-bicriteria approximation for all~$r \geq 1$. In Section~\ref{subsection:techview_greedy_lower}, we present the main ideas for proving our analysis is tight, assuming the Erd\H{o}s girth conjecture. In Section~\ref{subsection:techview_setcover_lower}, we summarise our proof that it is \mbox{W[1]-hard} to obtain a~\mbox{$(h(k), 2 - \varepsilon)$-bicriteria} approximation, for any computable function~$h$ and for any $\varepsilon > 0$. In Section~\ref{subsection:techview_setcover_upper}, we present a~\mbox{$(4k \log n, 1)$-bicriteria} approximation. In Section~\ref{subsection:techview_structure}, we summarise the structure of the remainder of the paper.

\subsection[Previous algorithm]{Previous algorithm of~\cite{DBLP:journals/talg/GudmundssonW22}}
\label{subsection:techview_previous}

Gudmundsson and Wong's~\cite{DBLP:journals/talg/GudmundssonW22} algorithm constructs the greedy spanner with a simple modification. The traditional greedy $t$-spanner takes as input a set of vertices, i.e. an empty graph, however, the modified greedy \mbox{$t$-spanner~\cite{DBLP:journals/talg/GudmundssonW22}} takes as input a set of vertices and edges, i.e. a non-empty graph.

The greedy $t$-spanner construction has two steps. First, all edges that are not in the initial graph are sorted by their length. Second, the edges are processed from shortest to longest. A processed edge~$uv$ is added if~$d_G(u,v) > t \cdot d_M(u,v)$, otherwise the edge~$uv$ is not added.

For Problem~\ref{problem:graph_augmentation}, Gudmundsson and Wong's~\cite{DBLP:journals/talg/GudmundssonW22} show, in their main lemma, that if the greedy \mbox{$t$-spanner} adds at least~$k+1$ edges, then~\mbox{$t \leq (k+1) \, t^*$}. Here,~$t^*$ is the minimum dilation if~$k$ edges are added to our graph.  Using this lemma, they then perform a binary search over a multiplicative \mbox{$(1+\delta)$-grid} for a~$t \in \mathbb R$ such that the greedy \mbox{$(1+\delta)\,t$-spanner} adds at most~$k$ edges, but the greedy \mbox{$t$-spanner} adds at least~$k+1$ edges. Therefore, $(1+\delta)\, t$ is a~$(1+\delta)(k+1)$-approximation of~$t^*$, since we can add $k$ edges to obtain a \mbox{$(1+\delta)\,t$-spanner} and~$(1+\delta)\,t \leq (1+\delta)(k+1) \, t^*$. 

Next, we briefly summarise the proof that if $k+1$ edges are added by the greedy algorithm, then $t \leq (k+1) \ t^*$. In Lemma~2 of~\cite{DBLP:journals/talg/GudmundssonW22}, the authors use the~$k+1$ greedy edges to construct a set of~$k+1$ vectors in  a~$k$-dimensional vector space. They define~$I$ to be a linearly dependent subset of the $k+1$ vectors. In Theorem~5 of~\cite{DBLP:journals/talg/GudmundssonW22}, the authors use the linear dependence property of~$I$ to prove that~$t \leq |I| \cdot t^*$. Since $|I| \leq k+1$, they obtain~$t \leq (k+1) \ t^*$. Unfortunately, the vector space approach of~\cite{DBLP:journals/talg/GudmundssonW22} fails extend to Problem~\ref{problem:bicriteria}, if the dilation factors~$g$ is sublinear in~$k$, even if the sparsity factor~$f$ is allowed to be polynomial in~$k$. 

\subsection{Greedy bicriteria approximation}
\label{subsection:techview_greedy_upper}

Our algorithm is the same as the one in~\cite{DBLP:journals/talg/GudmundssonW22}. Our difference lies in our analysis of the greedy \mbox{$t$-spanner}, in particular, in our main lemma. 

For Problem~\ref{problem:bicriteria}, we show, in our main lemma, that if the greedy~$t$-spanner adds at least~$fk+1$ edges, then $t \leq g t^*$. We will specify~$f$ and~$g$ later. Then, we apply the same binary search procedure to find a~$t\in \mathbb R$ where the greedy~$(1+\delta)\,t$-spanner adds at most~$fk$ edges, but the greedy~$t$-spanner adds at least~$fk+1$ edges. Then~$(1+\delta)\,t$ is an~$(f,(1+\delta) g)$-bicriteria approximation of~$t^*$.

Next, we briefly summarise our new proof that if $fk+1$ edges are added, then $t \leq gt^*$. In order to extend our analysis to sublinear dilation factors~$g$, we abandon the vector space approach of~\cite{DBLP:journals/talg/GudmundssonW22}. Our main idea is to construct an auxiliary graph, which we call the girth graph. The girth graph is an unweighted graph with~$2k$ vertices and $fk+1$ edges. Instead of defining~$I$ to be a linearly dependent subset, we define~$I$ to be the shortest cycle in the girth graph. We use a classical result in graph theory to choose the values~$f=2 \sqrt[r]{2} \ k^{1/r}$ and~$g=2r$, so that~$|I| \leq g$. In our final step, we use the cycle property of~$I$ to carefully prove~$t \leq |I| \cdot t^*$, which implies~$t \leq g t^*$. 

Putting this all together, we obtain Theorem~\ref{theorem:main_greedy_upper}. For a full proof, see Section~\ref{section:greedy_upper}.

\begin{restatable}{theorem}{maingreedyupper}
\label{theorem:main_greedy_upper}
For all $r \geq 1$, there is an $(f,(1+\delta)g)$-bicriteria approximation for Problem~\ref{problem:bicriteria} that runs in $O(n^3 (\log n + \log \frac 1 \delta))$ time, where 
$$f = 2 \sqrt[r]{2} \ k^{1/r} \quad\mbox{and}\quad g = 2r.$$
\end{restatable}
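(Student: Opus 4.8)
The plan is to follow the roadmap sketched in Section~\ref{subsection:techview_greedy_upper}. The algorithm is exactly the modified greedy spanner of~\cite{DBLP:journals/talg/GudmundssonW22}: binary search over a multiplicative $(1+\delta)$-grid of candidate dilation values to find $t$ such that the greedy $(1+\delta)t$-spanner adds at most $fk$ edges while the greedy $t$-spanner adds at least $fk+1$ edges; output the former. The running time $O(n^3(\log n + \log\frac1\delta))$ is inherited verbatim from~\cite{DBLP:journals/talg/GudmundssonW22}, since each greedy spanner construction takes $O(n^3)$ time (sort $O(n^2)$ candidate edges, and for each run a shortest-path query) and there are $O(\log n + \log\frac1\delta)$ iterations of binary search. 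So the entire content is the \emph{main lemma}: if the greedy $t$-spanner on $G$ adds at least $fk+1$ edges, then $t \leq g\,t^*$, with $f = 2\sqrt[r]{2}\,k^{1/r}$ and $g = 2r$.

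To prove the main lemma, fix an optimal solution $S^*$, $|S^*| = k$, so that $G^* = (V(G), E(G)\cup S^*)$ has dilation $t^*$. Let $e_1,\dots,e_{fk+1}$ be any $fk+1$ edges added by the greedy $t$-spanner, each $e_i = u_iv_i$ with $d_G(u_i,v_i) > t\cdot d_M(u_i,v_i)$ at the moment $e_i$ was processed. The first step is to build the \emph{girth graph} $H$: an unweighted multigraph whose vertex set is the $2k$ endpoints of the edges in $S^*$, and which has one edge per greedy edge $e_i$, so $fk+1$ edges on $2k$ vertices. The edge of $H$ corresponding to $e_i$ should be defined by looking at the $t^*$-spanner path from $u_i$ to $v_i$ in $G^*$ and recording which edges of $S^*$ it uses — concretely, contracting all of $G$'s original edges, the $G^*$-path from $u_i$ to $v_i$ becomes a walk through $S^*$-edges, and we want to assign $e_i$ to an appropriately chosen pair of $S^*$-endpoints (or handle the degenerate case where the path uses no $S^*$-edge, which would force $t \leq t^*$ directly). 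Getting this correspondence right so that a short cycle in $H$ translates back into a short certifying structure in $G$ is the crux.

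With $H$ in hand, the second step is purely combinatorial: a classical bound (Bondy–Simonovits / the Moore bound direction) says an $n$-vertex graph with more than $n^{1+1/r}$ edges has girth at most $2r$; here, checking that $fk+1 = 2\sqrt[r]{2}\,k^{1/r}\cdot k + 1 > (2k)^{1+1/r}$ — which holds since $(2k)^{1+1/r} = 2k\cdot(2k)^{1/r} = 2^{1+1/r}k^{1+1/r} \le 2\sqrt[r]{2}\,k^{1+1/r}$ — gives that $H$ has a cycle $I$ with $|I| \leq 2r = g$. The third and hardest step is the "cycle property" argument: take the cycle $I = e_{i_1}, e_{i_2}, \dots, e_{i_\ell}$ in $H$ (so $\ell \le g$) and prove $t \le \ell\cdot t^*$. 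The idea is that $I$ visits a closed sequence of $S^*$-endpoints; one of the greedy edges on the cycle, say the last one processed $e_{i_j}$, is the longest, and we can route $d_G(u_{i_j}, v_{i_j})$ along: (i) the $t^*$-spanner $G^*$-paths associated with the \emph{other} $\ell-1$ edges of the cycle, each of which — since those greedy edges were processed earlier, hence are shorter, hence by greediness are already $t$-spanned in $G$ without using $e_{i_j}$ — and (ii) the $S^*$-edges those paths traverse, which close up because $I$ is a cycle. Bounding the total length, each of the $\ell$ contributing pieces costs at most $t^*\cdot d_M(u_{i_j}, v_{i_j})$ (using the triangle inequality in $M$ and that $e_{i_j}$ is the longest), so $d_G(u_{i_j}, v_{i_j}) \le \ell\cdot t^*\cdot d_M(u_{i_j}, v_{i_j})$, contradicting $d_G(u_{i_j},v_{i_j}) > t\cdot d_M(u_{i_j},v_{i_j})$ unless $t \le \ell\cdot t^* \le g\,t^*$.

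I expect the main obstacle to be the third step — specifically, defining the girth-graph edges and choosing \emph{which} $S^*$-endpoints each greedy edge maps to, so that the cyclic concatenation of the associated $G^*$-paths genuinely forms a closed walk in $G$ that certifies $d_G(u_{i_j}, v_{i_j}) \le \ell\,t^*\,d_M(u_{i_j}, v_{i_j})$, and so that the "shorter greedy edges are already $t$-spanned without using the longest one" argument goes through cleanly (this is where processing order and the strict inequality in the greedy rule matter). The metric-space triangle inequality manipulations bounding each segment by $t^* d_M(u_{i_j},v_{i_j})$ are routine once the combinatorial skeleton is fixed. Finally, I would assemble: binary search correctness $+$ main lemma $\Rightarrow$ the output graph has $\le fk$ edges added and dilation $\le (1+\delta)t \le (1+\delta)g\,t^* = (1+\delta)2r\,t^*$, which is the claimed $(f,(1+\delta)g)$-bicriteria guarantee.
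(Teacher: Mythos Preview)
Your outline matches the paper's approach closely, and the first two steps (algorithm/runtime, girth-graph construction, and the girth bound $|I|\le 2r$) are essentially correct. The gap is in the third step. First, the path you build must live in $G_{i-1}$ (the graph just before the last-processed greedy edge $a_i$ is added), and $G_{i-1}$ contains \emph{no} edges of $S^*$. So the routing cannot use ``the $S^*$-edges those paths traverse''; the paper instead uses the earlier greedy edges $a_j$ for $j\in I\setminus\{i\}$ (which \emph{are} in $G_{i-1}$) together with the $G$-portions $G\cap\delta_{G^*}(a_j)$ of the optimal shortest paths. Your parenthetical about edges being ``already $t$-spanned'' is backwards: greedy edges are precisely those that were \emph{not} $t$-spanned when processed.

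Second, and more seriously, the length bound you assert---``each of the $\ell$ contributing pieces costs at most $t^*\cdot d_M(u_{i_j},v_{i_j})$''---does not follow from the triangle inequality alone. The na\"ive count gives $\sum_{j\in I} t^*\,d_M(a_j) + \sum_{j\in I\setminus\{i\}} d_M(a_j) \le (|I|\,t^* + |I|-1)\,d_M(a_i)$, which only yields $g=4r$, not $g=2r$. To recover the missing factor of~2 the paper (Lemma~\ref{lemma:path_upper_bound}) does a genuine case split on whether $\total(S^*\cap\delta_{G^*}(a_j)) \ge (1-\tfrac{1}{|I|})\,d_M(a_j)$ for every $j\in I$: if yes, the $G$-portions are short enough to give $L\le |I|\,t^*\,d_M(a_i)$ directly; if no for some $j$, then every $S^*$-edge on $\delta_{G^*}(a_j)$ is shorter than $a_j$ and hence already $t$-spanned in $G_{j-1}$, which forces $t<|I|\,t^*$ by a separate chain of inequalities. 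This case analysis is the real content of the tight bound and is absent from your sketch.
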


\subsection{Greedy analysis is tight}
\label{subsection:techview_greedy_lower}

Our analysis in Theorem~\ref{theorem:main_greedy_upper} is tight. This means one cannot obtain better bounds (up to constant factors) using the greedy spanner. Our proof assumes the Erd\H{o}s girth conjecture~\cite{erdos1965some}. 

The girth of an unweighted graph is defined as the number of edges in its shortest cycle. In the proof of Theorem~\ref{theorem:main_greedy_upper}, we cite a classical result stating that a graph with~$n$ vertices and at least~$n^{1+1/r}$ edges has girth at most~$2r$. The Erd\H{o}s girth conjecture states that there are graphs with~$n$ vertices, at least $\Omega(n^{1+1/r})$ edges and girth $2r+2$. Several conditional lower bounds have been shown under the Erd\H{o}s girth conjecture, namely, the sparsity-dilation trade-off of the greedy spanner~\cite{DBLP:journals/dcg/AlthoferDDJS93}, and the space requirement of approximate distance oracles~\cite{DBLP:journals/jacm/ThorupZ05}. 

We summarise our construction that proves that our analysis is tight. Assuming the Erd\H{o}s girth conjecture, there exists a graph~$H$ with~$n=k+1$ vertices, $m = \Omega(n^{1+1/r})$ edges, and girth~$2r+2$. We construct a graph~$G$ so that if we run the algorithm in Theorem~\ref{theorem:main_greedy_upper}, the girth graph of~$G$ would be~$H$. We use the properties of~$H$ to show that, if there are~$k$ edges that can be added to~$G$ so that the resulting dilation is~$t^*$, then if we add~$m-1$ edges to~$G$ using the greedy \mbox{$t$-spanner} construction, the resulting dilation is at least~$(2r+1)\ t^*$.

Putting this all together, we obtain Theorem~\ref{theorem:main_greedy_lower}. For a full proof, see . 

\begin{restatable}{theorem}{maingreedylower}
\label{theorem:main_greedy_lower}
For all $r \geq 1$, assuming the Erd\H{o}s girth conjecture, there is a graph class for which the algorithm in Theorem~\ref{theorem:main_greedy_upper} returns an $(f,g)$-bicriteria approximation, where
$$f = \Omega(k^{1/r}) \quad\mbox{and}\quad g = 2r+1.$$
\end{restatable}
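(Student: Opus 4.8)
The plan is to exhibit, for each $r\ge 1$ and for infinitely many $k$, a single metric graph $G$ on which the algorithm of Theorem~\ref{theorem:main_greedy_upper} provably behaves as badly as the girth-graph analysis allows. The starting point is the Erd\H{o}s girth conjecture applied with $n=k+1$: it supplies a graph $H$ on $n$ vertices with $m=\Omega(n^{1+1/r})=\Omega(k^{1+1/r})$ edges and girth $2r+2$. The metric graph $G$ is built around a copy of $V(H)$: the $n$ vertices of $H$ become ``terminal'' points, and the pairwise distances of the metric $M$ together with the initial edge set of $G$ are calibrated so that (i) the auxiliary girth graph that the proof of Theorem~\ref{theorem:main_greedy_upper} associates to $G$ is exactly $H$; (ii) one particular set of $k$ ``keystone'' edges --- morally a spanning star routed through a central point --- already brings the dilation of the augmented graph down to a constant $t^*$; and (iii) when the greedy $t$-spanner processes candidate edges from shortest to longest it is diverted into inserting $\Omega(m)$ short shortcuts --- the ones realizing the edge set of $H$ in the girth graph --- rather than the keystone edges, because at the moment each such short edge is processed its endpoints are still far apart in the partially built graph.

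Pinning down the optimum is the easy half: the keystone augmentation of (ii) uses exactly $k$ edges and is checked directly to have dilation $t^*$, and a matching lower bound --- that no $k$-edge augmentation does better, so $t^*$ really is the optimum --- follows from a routine edge-counting argument (too few edges are available to make the relevant terminal pairs close in $M$). This is what lets the dilation parameter be measured against $t^*$.

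The core of the argument is the analysis of the greedy run inside the binary search. Let $C$ be a shortest cycle of $H$; it has $2r+2$ edges. One shows that for a threshold $t$ slightly below $(2r+1)t^*$ the greedy $t$-spanner inserts a shortcut for every one of the $m$ edges of $H$, whereas for $t=(1+\delta)(2r+1)t^*$ it inserts only $m-1$ of them --- the one omitted being the shortcut for an edge $\{p,q\}$ of $C$, which at the time it is processed is already spanned within the threshold through the other $2r+1$ edges of $C$. Hence the binary search halts at $t\approx(2r+1)t^*$ and returns a graph $G'$ with $fk=m-1=\Omega(k^{1+1/r})$ added edges. Now apply, as a lower bound, the same relation between stretch and girth-graph cycle length that underlies Theorem~\ref{theorem:main_greedy_upper}: in $G'$ the pair $(p,q)$ has no girth-graph route through the edge $\{p,q\}$, so by the girth $2r+2$ of $H$ its shortest route uses at least $2r+1$ girth-graph edges, each contributing a factor $t^*$ to the stretch, whence $d_{G'}(p,q)\ge (2r+1)\,t^*\,d_M(p,q)$. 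This yields sparsity $f=\Omega(k^{1/r})$ and dilation $g=2r+1$.

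The bound comes out $2r+1$ and not $2r+2$ --- equivalently, the Erd\H{o}s girth conjecture (dense graphs of girth $2r+2$), rather than the unconditional bound (which only forces girth at most $2r$ once there are $n^{1+1/r}$ edges), is what must be invoked --- because the detour left for the omitted pair is the $(2r+2)$-cycle $C$ with the edge $\{p,q\}$ itself deleted, namely a path of exactly $2r+1$ edges. I expect the main obstacle to be the simultaneous calibration of all the distances so that three requirements hold at once: the girth graph computed from $G$ is literally $H$ (so that the factor $g=2r$ of Theorem~\ref{theorem:main_greedy_upper} genuinely cannot be improved on this instance); the binary search provably locks onto $t\approx(2r+1)t^*$ rather than drifting to a smaller threshold (which amounts to showing that with any smaller $t$ the greedy spanner still inserts more than $fk$ edges); and a short connection from $p$ to $q$ in $G'$ can arise only via a short cycle of $H$ through the corresponding edge --- the combinatorial-to-metric step that converts the girth of $H$ into the dilation lower bound. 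By comparison, choosing $k+1$ among the sizes for which the Erd\H{o}s construction is available, fixing a tie-breaking rule among equal-length shortcuts, and absorbing the $(1+\delta)$ slack of the binary search into the constants are routine.
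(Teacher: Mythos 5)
Your high-level plan mirrors the paper's: take an Erd\H{o}s-girth graph $H$ on $n=k+1$ vertices with $m=\Omega(n^{1+1/r})$ edges and girth $2r+2$, realise $H$ as the girth graph of a carefully built metric instance, force the greedy spanner to add one short shortcut per edge of $H$, and extract the dilation factor $2r+1$ from the $(2r+1)$-edge detour left after deleting one edge of a $(2r+2)$-cycle. That is exactly the skeleton of the paper's proof.

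However, the piece you flag as ``the main obstacle'' --- the explicit calibration of the metric --- is precisely the technical content of the theorem, and your proposal leaves it entirely open. The paper's construction is not a spanning-star gadget around a single hub: it creates, for each vertex $w_i$ of $H$ and each edge index $j$, a point $u_{i,j}$, and assembles three edge layers --- unit-length stars $M_1=\{u_{i,0}u_{i,j}\}$ forming $G$ itself, tiny $2\varepsilon$-length path edges $M_2=\{u_{i,0}u_{i+1,0}\}$ serving as the $k$ ``keystone'' edges, and $\varepsilon$-length shortcut edges $M_3=\{u_{a,j}u_{b,j}:e_j=(w_a,w_b)\}$, with $\varepsilon=1/(4rn)$. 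The three-layer structure is what forces the greedy spanner to exhaust $M_3$ before touching anything else, and what makes the detour length translate cleanly into the $(2r+1)$ factor once the $\varepsilon$-scale bookkeeping is done. Without exhibiting such a construction, you have a proof plan rather than a proof.

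Two smaller misconceptions are worth correcting. First, you claim a matching \emph{lower} bound on $t^*$ is needed ``so $t^*$ really is the optimum''; it is not. To show the algorithm's output threshold exceeds $gt^*$ one only needs an \emph{upper} bound $t^*\le t'$ coming from the $k$ keystone edges, because the required inequality $d_{G_i}(a)>gt^*d_M(a)$ becomes easier, not harder, when $t^*$ is bounded from above. Second, in the paper's instance $t^*$ is not a constant --- it scales like $\Theta(1/\varepsilon)=\Theta(rn)$ --- and the $2r+1$ factor emerges from the ratio of the detour length to $t^*$, not from $t^*$ being $O(1)$. Your narrative around the omitted edge of a shortest cycle $C$ is the right intuition for why $2r+1$ rather than $2r+2$ appears, but as stated (``only $m-1$ shortcuts inserted at $t=(1+\delta)(2r+1)t^*$'') it asserts more precision about the greedy run than the argument actually needs or establishes; the paper gets by with the weaker and cleaner statement that every $t\le(2r+1)t^*$ forces at least $m$ shortcuts, so the binary search cannot stop below that value.
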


\subsection{Set cover reduction}
\label{subsection:techview_setcover_lower}

Next, we show that the restriction~$r \geq 1$ is necessary in Theorem~\ref{theorem:main_greedy_upper}. Recall that Theorem~\ref{theorem:main_greedy_upper} states that there is a~\mbox{$(2 \sqrt[r]{2} \ k^{1/r},(1+\delta)\, 2r)$-bicriteria} approximation algorithm for all~$r \geq 1$. We prove that it is \mbox{W[1]-hard} to obtain a $(h(k),2-\varepsilon)$-bicriteria approximation for any computable function~$h$ and for any~$\varepsilon > 0$. Our proof is a reduction from set cover.

We summarise our construction of the Problem~\ref{problem:bicriteria} instance. We show that every set cover instance can be reduced to a Problem~\ref{problem:bicriteria} instance. We represent each element with a pair of points, and we represent each set with a triple of points. In our Problem~\ref{problem:bicriteria} instance, we add edges to connect either the pairs or the triples. We show via an exchange argument that we only need to consider adding edges that connect triple. Connecting a triple corresponds to choosing a set, which lowers the dilation of all elements in that set to below the threshold value. Finally, we show that a~$(h(k), 2-\varepsilon)$-bicriteria approximation for our Problem~\ref{problem:bicriteria} would solve set cover within an approximation factor of $h(k)$. However, it is \mbox{W[1]-hard} to obtain an $h(k)$-approximation algorithm for any computatable function~$h$~\cite{DBLP:journals/jacm/SLM19}, see Fact~\ref{fact:fpt_hardness}.

Putting this all together, we obtain Theorem~\ref{theorem:main_set_cover_lower}. For a full proof, see the full version of this paper, see Section~\ref{sec:set_cover_lower}. 

\begin{restatable}{theorem}{mainsetcoverlower}
\label{theorem:main_set_cover_lower}
For all $\varepsilon > 0$, assuming FPT $\neq$ W[1], one cannot obtain an $(f,g)$-bicriteria approximation for Problem~\ref{problem:bicriteria}, where
$$f = h(k) \quad\mbox{and}\quad g = 2 - \varepsilon,$$
and $h(\cdot)$ is any computable function.
\end{restatable}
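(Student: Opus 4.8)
The plan is to reduce parameterized Set Cover to Problem~\ref{problem:bicriteria} and then invoke Fact~\ref{fact:fpt_hardness}, which states that, assuming FPT $\neq$ W[1], Set Cover parameterized by the solution size $k$ admits no $h(k)$-approximation in FPT time. Given a Set Cover instance with universe $U$, family $\mathcal F \subseteq 2^U$ and parameter $k$, I would build a metric graph $G$ (realised, say, as a connected subgraph of a weighted graph whose shortest-path metric is $M$) together with a dilation threshold $\tau$. For each element $e \in U$ I add an \emph{element gadget} consisting of two vertices $a_e, b_e$, and for each set $S \in \mathcal F$ a \emph{set gadget} consisting of a triple of vertices; I also add a backbone that keeps $G$ connected and long detour edges that make $d_G(a_e,b_e)$ huge before any augmentation, and I pad $\mathcal F$ with a private singleton $\{e\}$ for every $e$ so that each element always has a fallback set gadget. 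The metric is calibrated so that distinct gadgets lie far apart and each set gadget is ``close'' only to its own triple and to the element gadgets of its members, and so that three properties hold. \emph{(Activation covers.)} Adding the $c = O(1)$ edges that connect the triple of a set $S$ forces $d_{G'}(a_e,b_e) \le \tau \, d_M(a_e,b_e)$ for every $e \in S$. \emph{(No activation, no cover.)} If no set containing $e$ has its triple activated, then every route between $a_e$ and $b_e$ is a ``there and back'' detour, so $d_{G'}(a_e,b_e) \ge 2\tau \, d_M(a_e,b_e)$ and the dilation of $G'$ is at least $2\tau$. \emph{(Everything else.)} Every other vertex pair has dilation at most $\tau$ irrespective of which edges are added. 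The budget of the constructed Problem~\ref{problem:bicriteria} instance is $k' = ck$. Crucially, the Set Cover instances I reduce from are those whose universe is arbitrarily large as a function of $k$ while the optimum is at most $k$; this is what lets the reduction defeat every fixed computable $h$, since the $h(k') \cdot k'$-edge budget is still far smaller than $|U|$.

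Completeness is immediate: if $\mathcal C \subseteq \mathcal F$ is a set cover of size $k$, then activating the triples of the sets in $\mathcal C$ spends exactly $k' = ck$ edges and, by the first and third properties, produces a graph of dilation at most $\tau$; hence the optimum $t^*$ of Problem~\ref{problem:graph_augmentation} on this instance satisfies $t^* \le \tau$.

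The technical heart, and the step I expect to be the main obstacle, is soundness, which I would route through an \emph{exchange lemma}: on these instances, any set of added edges can be rewritten, without increasing its cardinality and without increasing the resulting dilation, into a disjoint union of complete set-gadget triple activations. The cases that need care are edges joining two different gadgets, edges internal to an element gadget (in particular a direct $a_e b_e$ edge), and edges that only partially connect a triple; the spacing of the metric must ensure that none of these can reduce the dilation of any pair below what a triple activation already guarantees, and the singleton padding ensures that covering a single element is exactly a (degenerate) triple activation. Granting the exchange lemma, consider any $G'$ obtained by adding $f k' = fck$ edges with dilation at most $(2-\varepsilon)\, t^* \le (2-\varepsilon)\tau < 2\tau$. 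By the ``no activation, no cover'' property every element lies in some set whose triple is activated, and after applying the exchange lemma the $fck$ edges become at most $fk$ complete triple activations; the corresponding sets form a set cover of size at most $fk = h(k')\, k = h(ck)\, k$.

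Putting this together, suppose an $(h(k), 2-\varepsilon)$-bicriteria approximation algorithm for Problem~\ref{problem:bicriteria} existed. Running it on the constructed instance with parameter $k' = ck$ takes time polynomial in $|G|$, hence polynomial in the size of the Set Cover instance, and by the previous paragraph it produces (after the polynomial-time normalisation of the exchange lemma) a set cover of size at most $h(ck)\, k$ whenever a set cover of size $k$ exists. Since $k \mapsto h(ck)$ is a computable function, this is a polynomial-time, and in particular FPT-time, $h'(k)$-approximation for Set Cover parameterized by solution size, contradicting Fact~\ref{fact:fpt_hardness} under the assumption FPT $\neq$ W[1]. This establishes Theorem~\ref{theorem:main_set_cover_lower}.
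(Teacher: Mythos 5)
Your proposal follows essentially the same blueprint as the paper: a reduction from Set Cover in which each element is realised by a vertex pair whose metric distance is short but whose $G$-distance is long, each set by a vertex triple that can be ``activated'' to create a shortcut, soundness argued via an exchange lemma that normalises added edges to set-gadget activations, and Fact~\ref{fact:fpt_hardness} invoked at the end. Two details in your sketch should be pinned down to match what the paper actually does. First, the paper prewires each triple $\{v_\ell, v_\ell', w_\ell\}$ through $w_\ell$ inside $G$ so that activating a set gadget costs exactly one added edge $v_\ell v_\ell'$; your $c = O(1)$ must in fact be $c = 1$, since otherwise exchanging a single direct $a_e b_e$ edge for a $c$-edge activation would increase cardinality and break the counting in your exchange lemma. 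Second, your phrasing of the exchange lemma is too strong: a direct $a_e b_e$ edge certainly does reduce the dilation of that particular pair below what any activation gives, so the metric spacing cannot ensure that ``none of these can reduce the dilation of any pair below what a triple activation already guarantees.'' The invariant you actually need, and the one the paper argues, is that the exchange does not push the \emph{maximum} dilation above the $(2-\varepsilon)t^*$ threshold; note also that the paper only spells out the exchange for edges in $M_1 \cup M_2$, so your instinct to explicitly treat cross-gadget and partial-activation edges is the right one. (One more cosmetic difference: the paper places $k+1$ copies of the pair in each element gadget rather than using your singleton padding, but both devices serve to ensure that every element has a fallback set gadget to exchange into.)
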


\subsection{Set cover algorithm}
\label{subsection:techview_setcover_upper}

Finally, we provide a $(4k \log n, 1)$-bicriteria approximation that runs in~$O(n^6 \log n)$ time. Our main idea is to formulate the problem into a set cover instance, and then to apply an~$O(\log n)$-approximation algorithm for set cover~\cite{DBLP:journals/mor/Chvatal79}.

We state our algorithm. For each~$t \in \mathbb R$, we define a set cover instance $\mathcal I_t$. We construct the set cover instance $\mathcal I_t$ so that its elements are defined by pairs of vertices in $V(G)$, and each set in $\mathcal I_t$ is associated with a pair of vertices in $V(G)$. Formally, the elements of $\mathcal I_t$ are $\{(u,v): u,v \in V(G)\}$. The sets of $\mathcal I_t$ are $\{S_e: e \in V(G) \times V(G)\}$, where each set~$S_e$ contains all pairs $(u,v)$ that have dilation at most~$t$ after the edge~$e$ is added to~$G$. To formally define $S_e$, let~$G_e$ be the graph if an edge~$e$ is added to~$G$, and define $S_e = \{(u,v): d_{G_e}(u,v) \leq t \cdot d_M(u,v)\}$. Now, we can apply the algorithm of~\cite{DBLP:journals/mor/Chvatal79} on~$\mathcal I_t$ to obtain a set cover~$\mathcal S$. If $|\mathcal S| < k$, then we can add fewer than $k$ edges to $G$ to reduce its dilation to $t$, so $t > t^*$. We claim that if $|\mathcal S| > 4k^2 \log n$, then $t \leq t^*$. Therefore, we perform binary search on~$t$ in the same way as~\cite{DBLP:journals/talg/GudmundssonW22} to obtain a $(4k \log n,1)$-bicriteria approximation.

To prove correctness, it remains to show our claim that $|\mathcal S| > 4k^2 \log n  \implies t \leq t^*$. We show the contrapositive. If $t > t^*$, from the definition of~$t^*$ there exists~$k$ edges that can be added to~$G$ to make it a~$t$-spanner. Consider a clique with vertices that are the endpoints of the~$k$ edges. Adding these~$2k^2$ edges to the graph would make it a~$t$-spanner. Moreover, each $t$-path, that is, a~$(u,v)$-path with length at most~$t \cdot d_M(u,v)$, uses at most one edge in the clique. Therefore, the union of the sets~$S_e$, where~$e$ is an edge in the clique, forms a set cover over all pairs of vertices~$(u,v)$. The optimal solution of the set cover instance is at most~$2k^2$. The algorithm of~\cite{DBLP:journals/mor/Chvatal79} returns an $O(\log (n^2))$-approximation, since the number of elements and sets in $\mathcal I_t$ is $O(n^2)$. Putting this together, our algorithm returns a set cover $\mathcal S$ such that $|\mathcal S| \leq 2k^2 \log (n^2) = 4k^2 \log n$, as required.

Next, we analyse the running time. Computing $S_e$ takes $O(n^3)$ time for each $e \in V(G) \times V(G)$. Therefore, constructing the set cover instance $\mathcal I_t$ takes $O(n^5)$ time. The number of elements and the number of sets in $\mathcal I_t$ is $O(n^2)$. Therefore, the cubic time algorithm of~\cite{DBLP:journals/mor/Chvatal79} takes $O(n^6)$ time in total. Finally, performing the $\log n)$ steps in the search brings the total running time to $O(n^6 \log n)$.

Putting this all together, we obtain Theorem~\ref{theorem:main_set_cover_upper}.

\begin{theorem}
\label{theorem:main_set_cover_upper}
There is an $(f,g)$-bicriteria approximation for Problem~\ref{problem:bicriteria}, where
$$f = 4k \log n \quad\mbox{and}\quad g = 1.$$
\end{theorem}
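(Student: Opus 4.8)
The plan is to reduce, for each candidate value $t$ of the target dilation, Problem~\ref{problem:bicriteria} to a set cover instance $\mathcal I_t$ and to solve $\mathcal I_t$ with Chv\'atal's greedy $O(\log n)$-approximation~\cite{DBLP:journals/mor/Chvatal79}, all wrapped inside a binary search over $t$ in the style of~\cite{DBLP:journals/talg/GudmundssonW22}. For $e \in V(G)\times V(G)$ write $G_e = (V(G), E(G)\cup\{e\})$; the ground set of $\mathcal I_t$ is the set of all pairs $(u,v)\in V(G)\times V(G)$, and the set associated with $e$ is $S_e = \{\,(u,v) : d_{G_e}(u,v)\le t\cdot d_M(u,v)\,\}$. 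If $\mathcal S$ is the cover produced by the greedy algorithm on $\mathcal I_t$, then, since $d_{G\cup\mathcal S}(u,v)\le d_{G_e}(u,v)$ for every $e\in\mathcal S$, adding the (at most $|\mathcal S|$) edges of $\mathcal S$ to $G$ yields a graph of dilation at most $t$. I would output the cover $\mathcal S$ obtained at the smallest candidate value $t$ for which $|\mathcal S|\le 4k^2\log n$.

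The heart of the argument is the claim that if $t\ge t^*$ then the optimum of $\mathcal I_t$ has size at most $2k^2$; granting this, the greedy algorithm returns $|\mathcal S| \le 2k^2\cdot\log(n^2) = 4k^2\log n = fk$, which is the desired sparsity. To prove the claim, fix an optimal augmentation $S^*$ with $|S^*| = k$, let $W$ be the set of its $2k$ endpoints, and let $K$ be the clique on $W$ (at most $2k^2$ edges). For an arbitrary pair $(u,v)$, take a $(u,v)$-path $P$ in $G\cup S^*$ of length at most $t^*\,d_M(u,v)\le t\,d_M(u,v)$ and write $P = P_0\,e_1\,P_1\cdots e_\ell\,P_\ell$, where each $e_i\in S^*$ and each $P_i$ uses only edges of $G$. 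If $\ell = 0$ then $(u,v)\in S_e$ for every $e$. Otherwise let $a$ be the endpoint of $e_1$ entered by $P$ (so $P_0$ runs from $u$ to $a$) and $b$ the endpoint of $e_\ell$ left by $P$ (so $P_\ell$ runs from $b$ to $v$); both lie in $W$. Since every edge of $G\cup S^*$ has length equal to the metric distance between its endpoints, the triangle inequality gives $d_M(a,b)\le \operatorname{len}(P) - \operatorname{len}(P_0) - \operatorname{len}(P_\ell)$, so the walk $P_0\cdot ab\cdot P_\ell$ — which uses the single clique edge $ab\in K$ (or, if $a=b$, lies entirely in $G$) — has length at most $\operatorname{len}(P)\le t\,d_M(u,v)$; hence $(u,v)\in S_{ab}$. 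Thus $\{\,S_e : e\in K\,\}$ covers $\mathcal I_t$, proving the claim.

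It remains to justify the binary search and the choice of candidate values. In the converse direction, if $|\mathcal S|\le k$ then $G$ can be made a $t$-spanner with at most $k$ edges, so $t\ge t^*$; equivalently $t<t^*$ forces $|\mathcal S|\ge k+1$. To ensure the output has dilation at most $t^*$, I would take the candidate set to be $T = \{\,d_{G_e}(u,v)/d_M(u,v) : e\in V(G)\times V(G),\ u,v\in V(G)\,\}$, of size $O(n^4)$, and note that $t' := \operatorname{dilation}(G\cup K)\le \operatorname{dilation}(G\cup S^*) = t^*$ belongs to $T$: repeating the rerouting argument inside $G\cup K$ shows a shortest $(u,v)$-path there can always be taken to use at most one clique edge, so $\operatorname{dilation}(G\cup K)$ equals $d_{G_e}(u,v)/d_M(u,v)$ for some $e$. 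The same argument shows $\{\,S_e : e\in K\,\}$ already covers $\mathcal I_{t'}$, so $|\mathcal S|\le fk$ at $t = t'$; hence the smallest qualifying candidate $\hat t$ satisfies $\hat t\le t'\le t^*$, and the returned graph $G\cup\mathcal S$ has at most $fk$ added edges and dilation at most $\hat t\le t^* = g\,t^*$. For the running time, $d_G$ and each $d_{G_e}$ are obtained by all-pairs shortest paths, so $\mathcal I_t$ is built in $O(n^5)$ time; its ground set and set family both have size $O(n^2)$, so the greedy algorithm runs in $O(n^6)$ time; the $O(\log n)$ search iterations then give $O(n^6\log n)$ in total.

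I expect the main obstacle to be the claim of the second paragraph — that the optimum of $\mathcal I_t$ is $O(k^2)$ — and, within it, the rerouting step: although a shortest $t$-path in $G\cup S^*$ may traverse many of the $k$ added edges, one must show it can be replaced by a no-longer path that leaves $G$ at most once, via a single shortcut between two endpoints of added edges. Once this clique-covering lemma is established, the remaining ingredients — the greedy set-cover guarantee, the converse direction of the search, and the fact that the relevant dilation value already lies in the polynomial-size set $T$ — are routine.
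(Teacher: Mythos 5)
Your proof is correct and follows essentially the same route as the paper: the same set-cover instance $\mathcal I_t$ with sets $S_e$, the same clique-covering bound of $2k^2$ on the optimum, the same invocation of Chv\'atal's greedy algorithm, and the same $O(n^6\log n)$ running time. You are actually more careful than the paper in two places that deserve credit: you give a genuine rerouting argument showing that any $t$-path in $G\cup S^*$ can be replaced by a no-longer path using at most one clique edge (the paper merely asserts that ``each $t$-path uses at most one edge in the clique,'' which is false as literally stated and needs exactly the triangle-inequality shortcut you supply), and you exhibit an explicit $O(n^4)$-size candidate set $T$ containing $\operatorname{dilation}(G\cup K)\le t^*$, which is what actually delivers the exact $g=1$ guarantee where the paper's appeal to the binary search of Gudmundsson and Wong would naively only give $g=1+\delta$.
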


This completes the overview of the main results of this paper. 

\subsection{Structure of paper}
\label{subsection:techview_structure}

The structure of the remainder of our paper is summarised in Table~\ref{table:structure}.

\renewcommand{\arraystretch}{1.3}
\setlength{\tabcolsep}{5mm}
\begin{table}[ht]
    \centering
    \begin{tabular}{||c|c|c||}
        \hline
         & Reference & Proof\\
        \hline
        \hline
        $(2 \sqrt[r]{2} \ k^{1/r},(1+\delta)\, 2r)$-bicriteria approximation & 
        Theorem~\ref{theorem:main_greedy_upper} & 
        Section~\ref{section:greedy_upper}\\
        \hline
        {Theorem~\ref{theorem:main_greedy_upper} analysis is tight} &
        Theorem~\ref{theorem:main_greedy_lower} &
        Section~\ref{section:greedy_lower}
        \\
        \hline
        $(h(k),2 - \varepsilon)$-bicriteria approximation is W[1]-hard &
        Theorem~\ref{theorem:main_set_cover_lower} &
        Section~\ref{sec:set_cover_lower}
        \\
        \hline
    \end{tabular}
    \caption{Proofs of Theorems~\ref{theorem:main_greedy_upper}, \ref{theorem:main_greedy_lower}, \ref{theorem:main_set_cover_lower} can be found in Section~\ref{section:greedy_upper}, Section~\ref{section:greedy_lower}, and Section~\ref{sec:set_cover_lower}.}
    \label{table:structure}
\end{table}

\section{Greedy bicriteria approximation}
\label{section:greedy_upper}

In this section, we will prove Theorem~\ref{theorem:main_greedy_upper}. We restate the theorem for convenience.

\maingreedyupper*

Recall from Section~\ref{section:introduction} that the vertices and edges of~$G$ are~$V(G)$ and~$E(G)$ respectively. Let~$e \in V(G) \times V(G)$ be an edge not necessarily in~$E(G)$. Let~$d_M(e)$ denote the length of the edge~$e$ in the metric space~$M$ and let~$d_G(e)$ denote the shortest path distance between the endpoints of~$e$ in the graph~$G$. Consider a minimum dilation graph~$G^*$ after adding an optimal set~$S^*$ of~$k$ edges to~$G$. Let~$t^*$ be the dilation of~$G^*$.

Recall from Section~\ref{section:technical_overview} that our approach is to use the greedy~$t$-spanner construction. We formalise the construction in the definition below.

\begin{definition}
\label{definition:greedy_t_spanner}
Define $G_0 = G$, and for~$i \geq 1$, define~$G_i = G_{i-1} \cup \{a_i\}$, where $a_i$ is the shortest edge in $V(G) \times V(G)$ satisfying $d_{G_{i-1}}(a_i) > t \cdot d_M(a_i)$. The process halts if no edge~$a_i$ exists.
\end{definition}

We have two cases: either the process halts after adding more than~$fk$ edges, or after adding at most~$fk$ edges. If more than~$fk$ edges are added, we show a dilation bound on~$t$. In particular, Lemma~\ref{lemma:t_gt} states that if there is an edge~$a_i$ satisfying $d_{G_{i-1}}(a_i) > t \cdot d_M(a_i)$ for all~$1 \leq i \leq fk+1$, then we have the dilation bound~$t \leq gt^*$. We will specify the parameters~$f, g \geq 1$ later in this section.

Our approach is to construct an auxilliary graph~$H$, which we will also refer to as the girth graph. Define the vertices of~$H$ to be~$V(H) = \{v_1, \ldots, v_{2k}\}$. Each vertex in~$V(H)$ corresponds to an endpoint of an edge in the optimal set of~$k$ edges~$S^*$. In particular, let $S^* = \{s_1, \ldots, s_k\}$, and let~$v_{2i-1},v_{2i} \in V(H)$ correspond to the endpoints of~$s_i$. Define the edges of~$H$ to be~$E(H) = \{e_1, \ldots, e_{fk+1}\}$. We will describe the procedure for constructing each edge~$e_i$. 

Consider the greedy edge~$a_i$, see Figure~\ref{fig:3_greedy_upper_bound_01}. Define~$\delta_{G^*}(a_i)$ to be the shortest path between the endpoints of~$a_i$ in~$G^*$, shown in grey in Figure~\ref{fig:3_greedy_upper_bound_01}. Note that~$\delta_{G^*}(a_i)$ denotes a path, whereas~$d_{G^*}(a_i)$ denotes a length. Suppose that there are no edges in~$S^*$ along the path~$\delta_{G^*}(a_i)$, for some~$1 \leq i \leq fk+1$. Then, 
\[
    t^* \cdot d_M(a_i) \geq d_{G^*}(a_i) = d_G(a_i) \geq d_{G_{i-1}}(a_i) > t \cdot d_M(a_i),
\]
so $t < t^* \leq gt^*$, which would already imply Lemma~\ref{lemma:t_gt}. 

\begin{figure}[ht]
    \centering
    \includegraphics{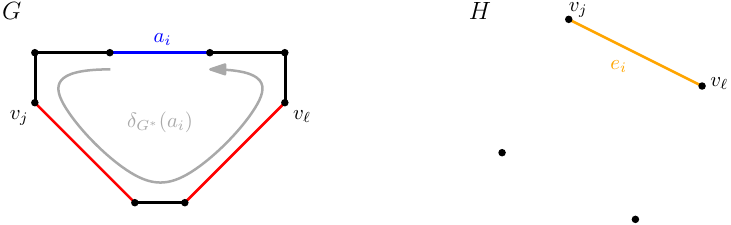}
    \caption{Left: The graph~$G$ (black), the greedy edge~$a_i$ (blue), the path~$\delta_{G^*}(a_i)$ (grey), and the edges~$\delta_{G^*}(a_i) \cap S^*$ (red). Right: The girth graph~$H$ and the edge~$e_i$ (orange).}
    \label{fig:3_greedy_upper_bound_01}
\end{figure}

Therefore, we can assume that~$\delta_{G^*}(a_i)$ contains at least one edge in~$S^*$, for every $i = 1, \ldots, fk+1$. Consider the edges~$\delta_{G^*}(a_i) \cap S^*$, shown in red in Figure~\ref{fig:3_greedy_upper_bound_01}. Choose a direction for the path~$\delta_{G^*}(a_i)$, sort the list of endpoints of~$\delta_{G^*}(a_i) \cap S^*$ with respect to this direction, and let the first and last endpoints in the sorted list be~$v_j$ and $v_\ell$. Another way to characterise~$v_j$ (respectively~$v_\ell$) is that~$v_j$ is an endpoint of an edge in~$\delta_{G^*}(a_i) \cap S^*$ so that the shortest path between~$v_j$ and one of the endpoints of~$a_i$ contains no edges in~$S^*$ (respectively the other endpoint of~$a_i$). Finally, we define~$e_i$ to be the edge in~$H$ connecting~$v_j$ to~$v_\ell$. Note that~$e_i$ is an undirected, unweighted edge, shown in orange in Figure~\ref{fig:3_greedy_upper_bound_01}. This completes the construction of~$H$. 

In Figure~\ref{fig:3_greedy_upper_bound_02}, we provide a more complete example of a graph~$G$ and its girth graph~$H$. The optimal set of~$k=4$ edges is~$S^* = \{s_1, s_2, s_3, s_4\}$, which is shown in red. The five greedy edges $\{a_1, a_2, a_3, a_4, a_5\}$ are shown in blue. The first and last endpoints of~$\delta_{G^*}(a_1) \cap S^*$ are~$v_1$ and~$v_3$, so~$e_1 = v_1v_3$. Similarly,~$e_2 = v_3v_5$, $e_3 = v_5v_7$, $e_4 = v_1 v_7$ and~$e_5 = v_5v_6$. 

\begin{figure}[ht]
    \centering
    \includegraphics{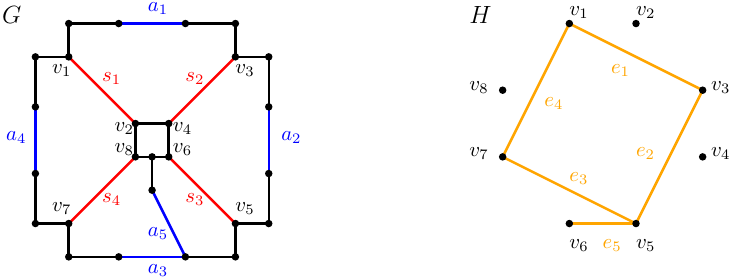}
    \caption{Left: The graph~$G$ (black), the optimal edges~$s_1, \ldots, s_4$ (red), and the greedy edges~$a_1, \ldots, a_5$ (blue). Right: The girth graph~$H$ has edges $e_1, \ldots, e_5$ (orange) and a girth of~$4$.}
    \label{fig:3_greedy_upper_bound_02}
\end{figure}

Next, define~$J$ to be the shortest cycle in $H$, and define~$I = \{j: e_j \in J\}$. Therefore, the girth of~$H$ is $|J| = |I|$. Note that $H$ has $2k$ vertices and $fk+1$ edges, so $J$ is guaranteed to exist if $f \geq 2$.

We use a classical result in graph theory to set the parameters~$f$ and~$g$.

\begin{lemma}
\label{lemma:girth_lemma}
A graph with~$n$ vertices and at least~$n^{1 + 1/r}+1$ edges has girth at most~$2r$. 
\end{lemma}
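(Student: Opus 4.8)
The plan is to prove the contrapositive: if $G$ is a simple graph on $n$ vertices of girth at least $2r+1$, then $|E(G)| \le n^{1+1/r}$, and in particular $|E(G)| < n^{1+1/r}+1$. (A multigraph with a loop or a parallel edge has girth at most $2 \le 2r$, so we may assume $G$ is simple; and if the average degree $d := 2|E(G)|/n$ is below $2$ then $|E(G)| < n \le n^{1+1/r}$ since $r \ge 1$, so we may assume $d \ge 2$.)

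The tool is the Moore bound, in the form valid for the \emph{average} degree of a not-necessarily-regular graph, due to Alon, Hoory, and Linial. The underlying mechanism is a breadth-first search: fix a vertex $v$ and consider its BFS levels $L_0, L_1, \dots, L_r$. Since the girth exceeds $2r$, no two vertices of a level $L_i$ with $i \le r$ are adjacent — such an edge would close a cycle of length at most $2i \le 2r$ — and no vertex of $L_{i+1}$ with $i+1 \le r$ has two neighbours in $L_i$, as that would give a cycle of length at most $2i+2 \le 2r$; hence the first $r$ levels span a tree and the radius-$r$ ball around $v$ expands by a factor tied to $d-1$ at each step. Averaging this over the choice of $v$ (and applying it to a densest component if $G$ is disconnected) gives
\[
    n \;\ge\; 1 + d\sum_{i=0}^{r-1}(d-1)^i \;>\; (d-1)^r,
\]
so $d - 1 < n^{1/r}$, i.e.\ $d < n^{1/r}+1$. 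Rearranging, $|E(G)| = \frac{1}{2}dn < \frac{1}{2}\bigl(n^{1+1/r}+n\bigr) \le n^{1+1/r}$, the last step again using $n \le n^{1+1/r}$. This proves the contrapositive, hence the lemma.

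The one genuinely substantive point is that the ball-growth estimate must be used with the \emph{average} degree, not merely the minimum degree. If only the minimum degree is available, the natural workaround is to repeatedly delete vertices of degree at most $n^{1/r}$, exposing a non-empty subgraph of minimum degree exceeding $n^{1/r}$ — the deletion removes fewer than $n \cdot n^{1/r} = n^{1+1/r}$ edges, so the subgraph survives whenever $|E(G)| \ge n^{1+1/r}+1$ — and then apply the ordinary (regular) Moore bound there. But passing to such a subgraph costs roughly a factor of two in the usable degree, so that route only yields $|E(G)| = O(n^{1+1/r})$ with a constant larger than $1$, which the sharp form of the statement (additive slack only $1$) does not absorb. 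Thus I would invoke the irregular Moore bound of Alon, Hoory, and Linial directly; if a fully self-contained proof is wanted, its core is the path-counting/spectral argument behind that bound, which I regard as the main obstacle here.
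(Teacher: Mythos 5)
Your proof is correct, but note that the paper does not actually argue this lemma internally; it defers to Bollob\'as's textbook and to Lemma~2 of Leucci's notes. You instead derive the statement from the Alon--Hoory--Linial irregular Moore bound, which is an equally legitimate citation, and the arithmetic checks out: for average degree $d\ge 2$ and girth $\ge 2r+1$, the odd-girth Moore bound gives $n \ge 1 + d\sum_{i=0}^{r-1}(d-1)^i > (d-1)^r$, hence $d < n^{1/r}+1$ and $|E(G)| = dn/2 < (n^{1+1/r}+n)/2 \le n^{1+1/r}$; together with your dispatch of the $d<2$ and non-simple cases, this is the contrapositive of the lemma, with strict room to spare below the stated threshold $n^{1+1/r}+1$. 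Your observation about the elementary alternative is also accurate: passing to a subgraph of minimum degree exceeding $d/2$ and invoking the ordinary (regular) Moore bound only yields $|E(G)| < n^{1+1/r}+n$, which would still be adequate for the paper's application (the constant $2\sqrt[r]{2}$ in Theorem~\ref{theorem:main_greedy_upper} would merely grow by an additive term) but does not reproduce the lemma exactly as stated. Two small cautions: the phrase ``averaging over $v$'' understates how nontrivial Alon--Hoory--Linial actually is --- the proof goes via non-backtracking walks, not a naive BFS average --- and your BFS-level claims have off-by-one slips (an edge inside $L_r$ closes a cycle of length up to $2r+1$, not $2r$, so it is not forbidden by girth $\ge 2r+1$). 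Since you invoke the irregular Moore bound as a black box, exactly as the paper invokes its own references, neither slip affects correctness, but the sketch should not be read as a self-contained argument.
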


\begin{proof}
The lemma is a classical result~\cite{bollobas2004extremal}. Lemma 2 of~\cite{leucci2019graph} provides a self-contained proof.
\end{proof}

With Lemma~\ref{lemma:girth_lemma} in mind, we set~$f = 2 \sqrt[r]{2} \ k^{1/r}$ and $g = 2r$, where~$r \geq 1$. Then, the graph~$H$ has $2k$ vertices, $(2k)^{1+1/r}+1$ edges, and therefore~$H$ has girth~$|I| \leq g = 2r$. Having defined the girth graph~$H$, the indices~$I$, and the parameters~$f$ and~$g$, the next step is to prove Lemma~\ref{lemma:t_gt}.

\begin{restatable}{lemma}{tgt}
\label{lemma:t_gt}
If~$a_j$ exists for all~$j = 1,\ldots,fk+1$, then~$t \leq gt^*$.
\end{restatable}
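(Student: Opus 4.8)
The plan is to exploit the cycle structure of $J$ in the girth graph to build a sequence of inequalities relating $t$ and $t^*$, indexed by the edges of the cycle. Recall $I = \{j : e_j \in J\}$ and for each $j \in I$ the edge $e_j = v_{j(1)}v_{j(2)}$ of $H$ records the first and last $S^*$-endpoints encountered along the path $\delta_{G^*}(a_j)$. The key geometric fact I would isolate first is: for each $j \in I$, the path $\delta_{G^*}(a_j)$ decomposes into three consecutive pieces — a prefix from one endpoint of $a_j$ to $v_{j(1)}$ that contains \emph{no} edge of $S^*$, a middle piece from $v_{j(1)}$ to $v_{j(2)}$, and a suffix from $v_{j(2)}$ to the other endpoint of $a_j$ that again contains no edge of $S^*$. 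Since a subpath of $\delta_{G^*}(a_j)$ that avoids all of $S^*$ lies entirely in the original graph $G$, and $a_j$ was \emph{not} added when we processed it means $d_{G_{j-1}}(a_j) > t\, d_M(a_j)$ while the full $G^*$-path has length $\le t^* d_M(a_j)$; I want to charge the ``deficit'' between $t\,d_M(a_j)$ and the length of the prefix-plus-suffix to the middle piece.

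Concretely, write $p_j$ for the total $d_M$-length of the prefix and suffix of $\delta_{G^*}(a_j)$ (the $S^*$-free parts) and $m_j$ for the $d_M$-length of the middle part. Then $d_{G^*}(a_j) = p_j + m_j \le t^* d_M(a_j)$. The prefix and suffix are paths in $G$, and by the triangle inequality in $M$, together with the fact that $a_j$ (and every shorter edge, in particular the endpoints of the middle segment — here one must use that the greedy edges are processed shortest-first, so the relevant ``shortcut'' edges between $v_{j(1)}, v_{j(2)}$ and the endpoints of $a_j$ that were candidates have already been resolved) — I would derive a lower bound of the form $t\, d_M(a_j) \le d_{G_{j-1}}(a_j) \le p_j + (\text{something involving } d_M(v_{j(1)},v_{j(2)}))$, hence a bound like $d_M(v_{j(1)}, v_{j(2)}) \ge (t - t^*)\, d_M(a_j) / t^*$ after rearranging. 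This is the step I expect to be the main obstacle: pinning down exactly which shortcut edge to use and verifying it was available at step $j-1$, so that the prefix/suffix can legitimately be replaced by a single metric edge in the bound for $d_{G_{j-1}}(a_j)$.

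With such a per-edge inequality in hand, I would sum over $j \in I$ around the cycle $J$. Because the $e_j$ form a cycle in $H$, the vertices $v_{j(1)}, v_{j(2)}$ chain up: each $s_i \in S^*$ that appears is traversed, and going around the cycle the $S^*$-edges used, together with the ``middle'' segments, form a closed walk in $G^*$. Telescoping the lengths of the $S^*$-edges around this closed walk cancels (or is bounded by the sum of the $d_M(v_{j(1)},v_{j(2)})$'s via the triangle inequality applied to each $s_i$), and one is left with an inequality of the shape $\sum_{j \in I} (t - t^*) d_M(a_j) \le t^* \cdot (\text{sum of at most } |I| \text{ metric edge lengths each} \le \max_j d_M(a_j))$, which after dividing through yields $t - t^* \le (|I| - 1)\, t^*$, i.e. $t \le |I| \cdot t^*$. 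Finally, invoking the choice $f = 2\sqrt[r]{2}\,k^{1/r}$ and $g = 2r$ and Lemma~\ref{lemma:girth_lemma} — which guarantees the $2k$-vertex, $(2k)^{1+1/r}+1$-edge graph $H$ has girth $|I| \le 2r = g$ — gives $t \le g t^*$, completing the proof of Lemma~\ref{lemma:t_gt}. The only case left is when some $\delta_{G^*}(a_i)$ contains no $S^*$-edge, but that was already handled before the lemma statement and forces $t < t^* \le g t^*$ directly.
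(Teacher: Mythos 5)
Your high-level framing is on the right track---decompose each $\delta_{G^*}(a_j)$ into an $S^*$-free prefix, a middle segment between $v_{j(1)}$ and $v_{j(2)}$, and an $S^*$-free suffix, then exploit the cycle $J$ and the girth bound $|I|\le 2r$. But the central step you flag as ``the main obstacle'' is a genuine gap, and the paper's proof avoids it by taking a structurally different route.

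The per-edge inequality $d_M(v_{j(1)},v_{j(2)}) \ge (t-t^*)\,d_M(a_j)/t^*$ does not follow from the ingredients you list. The available inequalities give $t\,d_M(a_j) < d_{G_{j-1}}(a_j) \le (\text{prefix}) + d_{G_{j-1}}(v_{j(1)},v_{j(2)}) + (\text{suffix}) \le t^*\,d_M(a_j) + d_{G_{j-1}}(v_{j(1)},v_{j(2)})$, so what you actually obtain is $d_{G_{j-1}}(v_{j(1)},v_{j(2)}) > (t-t^*)\,d_M(a_j)$. To convert this into a bound on the \emph{metric} length $d_M(v_{j(1)},v_{j(2)})$ you would need $d_{G_{j-1}}(v_{j(1)},v_{j(2)}) \le t^*\,d_M(v_{j(1)},v_{j(2)})$, but $G_{j-1}$ is the greedy graph at step $j-1$, not the optimal graph $G^*$; nothing guarantees it has stretch $t^*$ on that pair. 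Your parenthetical about ``shortcut edges already resolved because the greedy is shortest-first'' is aimed at exactly this, but $v_{j(1)}v_{j(2)}$ itself is not an edge in $G_{j-1}$ and need not be short. Moreover, even granting the per-edge bound, the telescoping step does not yield the claimed form: summing gives $(t-t^*)\sum_j d_M(a_j) \le t^*\sum_j d_M(v_{j(1)},v_{j(2)})$, and bounding $d_M(v_{j(1)},v_{j(2)}) \le t^*\,d_M(a_j)$ produces an extra factor of $t^*$, giving $t \le t^* + |I|(t^*)^2$ rather than $t\le|I|\,t^*$.

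The paper's proof sidesteps both issues. Instead of deriving a per-edge metric inequality, Lemma~\ref{lemma:path} fixes $i=\max I$ and constructs a single explicit path in $G_{i-1}$ between the endpoints of $a_i$: it traverses the cycle $J$ minus $e_i$, using the $S^*$-free prefix/suffix pieces of each $\delta_{G^*}(a_j)$ to reach $v_{j(1)}$ and $v_{j(2)}$, and---crucially---using the earlier greedy edges $a_j$ ($j\in I\setminus\{i\}$, already present in $G_{i-1}$ because $i=\max I$) to cross from one endpoint of $a_j$ to the other. This is what makes the path live in $G_{i-1}$ without ever needing to bound $d_{G_{j-1}}(v_{j(1)},v_{j(2)})$. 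Lemma~\ref{lemma:path_lower_bound} then gives $L \ge t\,d_M(a_i)$ directly from the greedy condition, and Lemma~\ref{lemma:path_upper_bound} bounds $L\le |I|\,t^*\,d_M(a_i)$. Even this last step is delicate: naive bookkeeping only gives $L\le (|I|t^*+|I|-1)d_M(a_i)$, i.e.\ a loss of roughly a factor $2$; the paper recovers the tight factor via a case split on $\total(S^*\cap\delta_{G^*}(a_j))$, where the ``Case~1'' branch is precisely the place where the greedy shortest-first property is used (short $S^*$-edges inside the middle segment are already $t$-spanned in $G_{j-1}$, forcing a contradiction with the greediness of $a_j$). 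Your proposal contains no analogue of this case analysis, so even if the per-edge inequality could be repaired, you would still lose the constant.
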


We divide the proof of Lemma~\ref{lemma:t_gt} into three lemmas. In Lemma~\ref{lemma:path}, we construct a path. In Lemma~\ref{lemma:path_lower_bound}, we lower bound the length of the path. In Lemma~\ref{lemma:path_upper_bound}, we upper bound the length of the path. We start defining the path. 

\begin{lemma}
\label{lemma:path}
Let~$i = \max I$. There is a path in~$G$ between the endpoints of~$a_i$ using only edges in 
\[
    \{G \cap \delta_{G^*}(a_j): j \in I\} \cup \{a_j: j \in I \setminus \{i\}\}.
\]
\end{lemma}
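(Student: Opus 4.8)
The plan is to build the desired path by walking around the cycle $J$ in the girth graph $H$ and, for each edge $e_j \in J$, substituting back the geometric objects that gave rise to $e_j$. Recall that each $e_j$ with $j \in I$ connects vertices $v_{a(j)}, v_{b(j)} \in V(H)$, which are the first and last endpoints of the optimal edges lying along the path $\delta_{G^*}(a_j)$. The key structural fact is that the portion of $\delta_{G^*}(a_j)$ between an endpoint of $a_j$ and $v_{a(j)}$ (respectively between the other endpoint of $a_j$ and $v_{b(j)}$) contains \emph{no} edge of $S^*$, hence lies entirely in $G$. So from $a_j$ we get two $G$-paths: one from an endpoint of $a_j$ to $v_{a(j)}$, and one from the other endpoint of $a_j$ to $v_{b(j)}$; concatenating with the edge $a_j$ itself (for $j \neq i$), or just keeping the two halves separately (for $j = i$), we get, for each edge $e_j = v_{a(j)}v_{b(j)}$ of the cycle, a walk in $G \cup \{a_j\}$ connecting $v_{a(j)}$ to $v_{b(j)}$ that uses only edges in $G \cap \delta_{G^*}(a_j)$ together with $a_j$.

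Next I would traverse the cycle $J$ as $v_{c_0} \to v_{c_1} \to \cdots \to v_{c_{|I|-1}} \to v_{c_0}$, where consecutive pairs $v_{c_m}, v_{c_{m+1}}$ are the endpoints of some $e_{j_m}$. Replacing each cycle-edge $e_{j_m}$ by the corresponding $G \cup \{a_{j_m}\}$-walk described above produces a closed walk $W$ in $G$ (with the extra edges $a_j$ allowed). The point of choosing $i = \max I$ is to single out one special edge of the cycle, namely $e_i$: we do \emph{not} want to use the edge $a_i$ (it is the greedy edge whose endpoints we are trying to connect), so for $e_i$ we use only its two $G$-half-paths $G \cap \delta_{G^*}(a_i)$, which connect $v_{a(i)}$ and $v_{b(i)}$ respectively to the two endpoints of $a_i$. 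Cutting the closed walk $W$ at the edge $e_i$ and rerouting through these two half-paths instead yields a walk in $G$ between the two endpoints of $a_i$, and every edge it uses lies in $\{G \cap \delta_{G^*}(a_j) : j \in I\} \cup \{a_j : j \in I \setminus \{i\}\}$, exactly as required. Finally, a walk between two vertices contains a path between them using a subset of its edges, so we may pass from the walk to an honest path.

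The main obstacle I anticipate is bookkeeping the orientations and making sure the concatenation around the cycle actually \emph{connects up}: when I replace edge $e_{j_m} = v_{c_m}v_{c_{m+1}}$ by its $G \cup \{a_{j_m}\}$-walk, I need the endpoints of that walk to be precisely $v_{c_m}$ and $v_{c_{m+1}}$, which is guaranteed by the construction of $e_{j_m}$ (its endpoints in $H$ are $v_{a(j_m)}$ and $v_{b(j_m)}$), but I must be careful that the alternate characterization of $v_j, v_\ell$ — as the endpoints of $S^*$-edges on $\delta_{G^*}(a_j)$ reachable from an endpoint of $a_j$ within $G$ — really does give $S^*$-free, hence $G$-contained, sub-paths. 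A secondary subtlety is that $\delta_{G^*}(a_j)$ is a shortest path in $G^*$, so it is a genuine simple path and its intersection with $G$ is a union of sub-paths; I should state clearly that "$G \cap \delta_{G^*}(a_j)$" means the edge set of $\delta_{G^*}(a_j)$ that belongs to $E(G)$, and that the relevant half-paths are among these. Once orientations are fixed consistently around $J$, the rest is a routine concatenation argument, and the reduction from closed walk to path is standard.
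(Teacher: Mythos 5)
Your proposal is correct and follows essentially the same approach as the paper: walk around the cycle $J$ in $H$, replace each cycle-edge $e_j$ by the two $S^*$-free tails of $\delta_{G^*}(a_j)$ plus (for $j \neq i$) the edge $a_j$, and avoid $a_i$ by using only the two tails of $\delta_{G^*}(a_i)$. The only cosmetic differences are that the paper first deletes $e_i$ from $J$ to obtain a path $w_1,\ldots,w_m$ rather than cutting a closed walk, and that you make explicit the standard walk-to-path reduction that the paper leaves implicit.
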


\begin{proof}
Recall that~$J = \{e_j: j \in I\}$ is a cycle in~$H$. After removing the edge~$e_i$, there is still a path in~$J \subseteq H$ between the endpoints of~$e_i$. Let the vertices along this path be~$w_1, \ldots, w_m$, where~$e_i = w_1w_m$, and~$w_\ell w_{\ell+1} \in J$ for all~$\ell = 1,\ldots,m-1$. In Figure~\ref{fig:3_greedy_upper_bound_04}, the path~$w_1, \ldots, w_m$ is shown in orange. Let the endpoints of~$a_i$ be~$a_i(0)$ and~$a_i(1)$. Our approach is to use the path~$w_1, \ldots, w_m \subset H$ to construct a path between~$a_i(0)$ and~$a_i(1)$ that only uses edges in~$\{G \cap \delta_{G^*}(a_j): j \in I\} \cup \{a_j: j \in I \setminus \{i\}\}$. 

\begin{figure}[ht]
    \centering
    \includegraphics{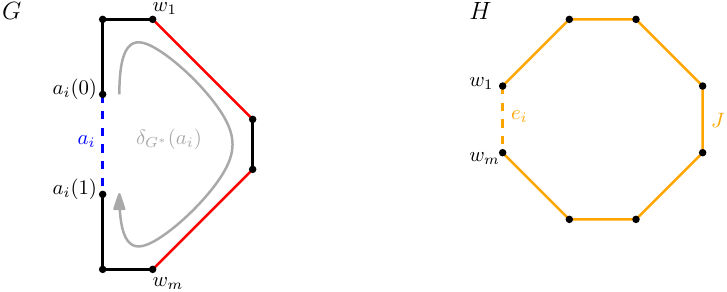}
    \caption{Left: The graph~$G$ (black), the greedy edge~$a_i$ (blue), the path~$\delta_{G^*}(a_i)$ (grey), and the edges~$\delta_{G^*}(a_i) \cap S^*$ (red). Right: The girth graph~$H$, the cycle $J$ (orange), and edge~$e_i$ (dashed).}
    \label{fig:3_greedy_upper_bound_04}
\end{figure}

First, we consider the edge~$e_i = w_1 w_m$. Recall from the definition of~$V(H)$ that~$w_1$ and~$w_m$ are endpoints of edges in the optimal set~$S^*$. Moreover, from the definition of $e_i \in E(H)$, we know that~$w_1$ and~$w_m$ are the first and last endpoints of~$S^*$ along the path~$\delta_{G^*}(a_i)$. The path~$\delta_{G^*}(a_i)$ is shown in grey in Figure~\ref{fig:3_greedy_upper_bound_04}. The endpoints of~$\delta_{G^*}(a_i)$ are~$a_i(0)$ and~$a_i(1)$. Therefore, the subpath of~$\delta_{G^*}(a_i)$ between~$a_i(0)$ and~$w_1$ only uses edges in~$G$ and no edges in~$S^* = G^* \setminus G$. Therefore, the subpath only uses edges in~$G \cap \delta_{G^*}(a_i)$. The subpath from~$a_i(0)$ to~$w_1$ is shown in black in Figure~\ref{fig:3_greedy_upper_bound_04}. Similarly, there is a path between~$w_m$ and~$a_i(1)$ using only edges in~$G \cap \delta_{G^*}(a_i)$.

Next, we consider the edge~$e_j = w_\ell w_{\ell+1}$, where $1 \leq \ell \leq m-1$,~$j \in I$ and~$j < i$. Let the endpoints of~$a_j$ be~$a_j(0)$ and~$a_j(1)$. From the definition of~$e_j = w_\ell w_{\ell+1}$, there is a subpath of~$\delta_{G^*}(a_j)$ between~$w_\ell$ and~$a_j(0)$ that only uses edges in~$G \cap \delta_{G^*}(a_j)$. Similarly, there is subpath of~$\delta_{G^*}(a_j)$ between~$a_j(1)$ and~$w_{\ell+1}$ that only uses edges in~$G \cap \delta_{G^*}(a_j)$. Therefore, there is a path between~$w_\ell$ and~$w_{\ell+1}$ that uses only edges in~$\{G \cap \delta_{G^*}(a_j)\} \cup a_j$.

See Figure~\ref{fig:3_greedy_upper_bound_03} for an example. Consider the edge~$e_1 = w_1 w_2$. There is a path between~$w_1$ and~$w_2$ that only uses edges in~$\{G \cap \delta_{G^*}(a_j)\}$, which are black edges, and the blue edge~$a_1$. Similarly arguments apply for~$e_2 = w_2 w_3$ and~$e_3 = w_3 w_4$.

\begin{figure}[ht]
    \centering
    \includegraphics{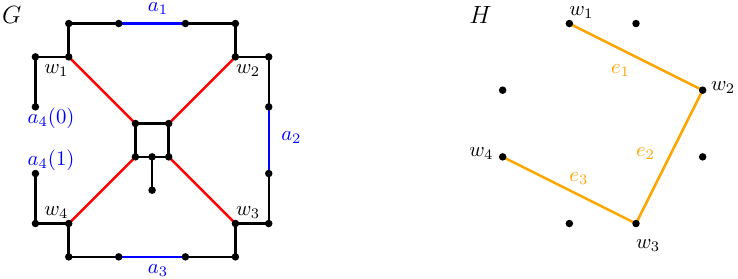}
    \caption{The path $w_1,w_2,w_3,w_4$ is shown on the right. There is a path between~$a_4(0)$ and~$a_4(1)$ only using the blue edges~$a_1,a_2,a_3$ and black edges in~$\delta_{G^*}(a_1),\delta_{G^*}(a_2),\delta_{G^*}(a_3)$ or~$\delta_{G^*}(a_4)$.}
    \label{fig:3_greedy_upper_bound_03}
\end{figure}

The final step is to put it all together. There is a path between~$a_i(0)$ and~$w_1$ that only uses edges in~$G \cap \delta_{G^*}(a_i)$. For~$\ell = 1,\ldots,m-1$, there is a path between~$w_\ell$ and~$w_{\ell+1}$ that only uses edges in~$\{G \cap \delta_{G^*}(a_j)\} \cup a_j$, where~$j \in I \setminus \{i\}$. There is a path between~$w_m$ and~$a_i(1)$ that only uses edges in~$G \cap \delta_{G^*}(a_i)$. Therefore, there is a path between~$a_i(0)$ and~$a_i(1)$ that only uses edges in~$\{G \cap \delta_{G^*}(a_j): j \in I\} \cup \{a_j: j \in I \setminus \{i\}\}$, as required.
\end{proof}

In Lemma~\ref{lemma:path_lower_bound}, we show a lower bound on the length of the path in Lemma~\ref{lemma:path}.

\begin{lemma}
\label{lemma:path_lower_bound}
The length of the path in Lemma~\ref{lemma:path} is at least~$t \cdot d_M(a_i)$. 
\end{lemma}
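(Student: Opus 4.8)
The plan is to bound the length of the path $P$ from Lemma~\ref{lemma:path} below by exploiting the defining property of the greedy edge $a_i$, where $i = \max I$. Recall that when $a_i$ was processed by the greedy construction, we had $d_{G_{i-1}}(a_i) > t \cdot d_M(a_i)$, i.e., in the graph $G_{i-1}$ — which is $G$ together with the edges $a_1, \ldots, a_{i-1}$ — every path between the endpoints of $a_i$ has length strictly greater than $t \cdot d_M(a_i)$. So it suffices to show that the path $P$ constructed in Lemma~\ref{lemma:path} is in fact a path in $G_{i-1}$; then its length is at least $d_{G_{i-1}}(a_i) > t \cdot d_M(a_i)$, and we are done (the inequality will only need to be non-strict for the statement, so this is more than enough).

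The key observation is which edges $P$ uses: by Lemma~\ref{lemma:path}, $P$ only uses edges in $\{G \cap \delta_{G^*}(a_j): j \in I\} \cup \{a_j : j \in I \setminus \{i\}\}$. Every edge of the first type lies in $G = G_0 \subseteq G_{i-1}$. Every edge $a_j$ of the second type has $j \in I \setminus \{i\}$, and since $i = \max I$ we have $j < i$, hence $a_j$ was already added before stage $i$, so $a_j \in E(G_{i-1})$. Therefore all edges of $P$ belong to $G_{i-1}$, and $P$ is a path between the endpoints of $a_i$ in $G_{i-1}$. Consequently the length of $P$ is at least the shortest-path distance $d_{G_{i-1}}(a_i)$, which by the greedy selection rule for $a_i$ (Definition~\ref{definition:greedy_t_spanner}) is strictly greater than $t \cdot d_M(a_i)$, giving the claimed lower bound.

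The main subtlety — and the step I would be most careful about — is confirming that $P$ as produced in Lemma~\ref{lemma:path} is genuinely a walk whose every edge lies in the stated set, and in particular that the edges $a_j$ appearing along $P$ really do satisfy $j < i$ rather than merely $j \neq i$. This is where the choice $i = \max I$ is essential: it guarantees $j < i$ for every $j \in I \setminus \{i\}$, so $a_j \in G_{i-1}$. (Had we picked a different representative of the cycle $J$, some $a_j$ might have $j > i$ and would not yet be present in $G_{i-1}$, breaking the argument.) Beyond that, the proof is just the chain of inequalities $\text{length}(P) \geq d_{G_{i-1}}(a_i) > t \cdot d_M(a_i)$, using only that shortest-path distance is a lower bound on the length of any particular path between the same endpoints.
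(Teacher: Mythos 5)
Your proof is correct and follows essentially the same approach as the paper's: show that the path of Lemma~\ref{lemma:path} lies entirely in $G_{i-1}$ (since $G \subseteq G_{i-1}$ and $j < i$ for every $j \in I \setminus \{i\}$ because $i = \max I$), then invoke the greedy selection rule $d_{G_{i-1}}(a_i) > t \cdot d_M(a_i)$. Your explicit remark that $i = \max I$ is what guarantees $j < i$ is a useful clarification, but otherwise the arguments coincide.
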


\begin{proof}
From Definition~\ref{definition:greedy_t_spanner}, we have $d_{G_{i-1}}(a_i) > t \cdot d_M(a_i)$. Therefore, any path in~$G_{i-1}$ between the endpoints of~$a_i$ has length at least~$t \cdot d_M(a_i)$. It suffices to show that the path is in~$G_{i-1}$. By Lemma~\ref{lemma:path}, all of the edges in the path are in $\{G \cap \delta_{G^*}(a_j): j \in I\}$ or $\{a_j: j \in I \setminus \{i\}\}$. But $\{G \cap \delta_{G^*}(a_j): j \in I\} \subseteq G \subseteq G_{i-1}$ and $\{a_j: j \in I \setminus \{i\}\} \subseteq G_{i-1}$. So the path is in~$G_{i-1}$ and its length is at least~$t \cdot d_M(a_i)$.
\end{proof}

In Lemma~\ref{lemma:path_upper_bound}, we upper bound the length of the path in Lemma~\ref{lemma:path}. 

\begin{lemma}
\label{lemma:path_upper_bound}    
If $t > gt^*$, then the length of the path in Lemma~\ref{lemma:path} is at most~$|I| \cdot t^* \cdot d_M(a_i)$.
\end{lemma}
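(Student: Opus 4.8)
The plan is to bound the path from Lemma~\ref{lemma:path} edge-group by edge-group. Recall the path uses only edges in $\{G \cap \delta_{G^*}(a_j): j \in I\} \cup \{a_j : j \in I \setminus \{i\}\}$, and it is assembled from (i) two subpaths of $\delta_{G^*}(a_i)$ (the portions from $a_i(0)$ to $w_1$ and from $w_m$ to $a_i(1)$), and (ii) for each $j \in I \setminus \{i\}$, a subpath of $\delta_{G^*}(a_j)$ together with the single edge $a_j$. So I would bound the contribution of each index $j \in I$ separately and then sum. For the index $i = \max I$, the two end-subpaths are disjoint pieces of $\delta_{G^*}(a_i)$, whose total length is at most the whole path length $d_{G^*}(a_i) \leq t^* \cdot d_M(a_i)$. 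For each $j \in I \setminus \{i\}$, the contribution is at most $(\text{length of a subpath of } \delta_{G^*}(a_j)) + d_M(a_j) \leq d_{G^*}(a_j) + d_M(a_j) \leq t^* \cdot d_M(a_j) + d_M(a_j)$. Summing over all of $I$ gives an upper bound of roughly $|I| \cdot t^* \cdot d_M(a_i)$ provided we can control each $d_M(a_j)$ in terms of $d_M(a_i)$.

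The key monotonicity fact to invoke is that the greedy edges are processed shortest-first (Definition~\ref{definition:greedy_t_spanner}): since $j < i$ for every $j \in I \setminus \{i\}$ (because $i = \max I$), we have $d_M(a_j) \leq d_M(a_i)$. Hence each term $d_{G^*}(a_j) \leq t^* d_M(a_j) \leq t^* d_M(a_i)$, and we must absorb the extra additive $d_M(a_j) \leq d_M(a_i)$ terms. This is exactly where the hypothesis $t > g t^* = 2r t^*$ is used: I would argue that $d_M(a_j)$ is small compared to $t^* d_M(a_j)$ — more precisely, since the greedy algorithm added $a_j$, we know $d_{G_{j-1}}(a_j) > t \cdot d_M(a_j)$, and this slack lets us fold $d_M(a_j)$ into a constant-factor loss, or alternatively one shows $d_M(a_j) \le \tfrac{1}{t}d_{G^*}(a_j) \cdot (\text{something})$. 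Carefully: I expect the clean way is to note that the subpath of $\delta_{G^*}(a_j)$ used, plus $a_j$, is bounded by $d_{G^*}(a_j) + d_M(a_j)$, and since $t^* \ge 1$ (dilation is always at least $1$) and $t > 2r t^* \ge 2$, we get $d_M(a_j) \le d_{G^*}(a_j) \le t^* d_M(a_j)$, so the per-index contribution is at most $2 t^* d_M(a_j) \le 2t^* d_M(a_i)$. Summing over $|I|$ indices and using $|I| \le 2r = g$, the total is at most $2|I| t^* d_M(a_i)$; but we want $|I| t^* d_M(a_i)$, so a factor-$2$ discrepancy must be resolved — this suggests the intended accounting pairs up each "$a_j$ plus subpath" more tightly, e.g. the two half-subpaths at index $i$ contribute only once among the $|I|$ slots, and the additive $d_M(a_j)$ terms are charged against the $t^*$-stretched subpaths at a different index, yielding exactly $|I|$ copies of $t^* d_M(a_i)$.

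The main obstacle, then, is the bookkeeping: showing that the additive $d_M(a_j)$ overhead (one per edge $a_j$, $j \in I \setminus\{i\}$) does not inflate the bound beyond $|I| \cdot t^* \cdot d_M(a_i)$, rather than $(|I| + |I|-1)$ or $2|I|$ times it. I would handle this by a global estimate rather than term-by-term: the path consists of at most $|I|$ "pieces," where $|I|-1$ pieces are of the form (subpath of $\delta_{G^*}(a_j)$) $\cup$ $\{a_j\}$ and two pieces are subpaths of $\delta_{G^*}(a_i)$. I bound the total length of the $|I|-1$ pieces of the first type plus the two pieces of $\delta_{G^*}(a_i)$ by $\sum_{j \in I} d_{G^*}(a_j) + \sum_{j \in I \setminus \{i\}} d_M(a_j)$. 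The first sum is at most $\sum_{j \in I} t^* d_M(a_j) \le |I| t^* d_M(a_i)$ by shortest-first monotonicity; for the second sum, I use that $t > g t^*$ together with $d_{G_{j-1}}(a_j) > t \cdot d_M(a_j)$ to show $\sum_{j \in I \setminus \{i\}} d_M(a_j)$ is strictly smaller than the slack already present, i.e. it is dominated because $t$ is much larger than $t^*$ — concretely $d_M(a_j) < \tfrac{1}{t}d_{G_{j-1}}(a_j)$ is not directly what we need, so instead I would bound $d_M(a_j) \le t^* d_M(a_j)$ (using $t^* \ge 1$) only when it strictly helps, and otherwise reorganize the path so each $a_j$ is amortized against the subpath it is attached to. The cleanest finish is likely: length of path $\le \sum_{j\in I}\big(d_{G^*}(a_j) + [j \ne i]\,d_M(a_j)\big) \le \sum_{j \in I} t^* d_M(a_j) + \sum_{j \ne i} d_M(a_j)$, and then since $t > 2rt^* \ge 2t^*$ and $|I| \le 2r$, one shows the second sum is at most $(|I|-1)\,d_M(a_i) \le \tfrac{|I|-1}{2t^*}\cdot t\, d_M(a_i)$... — in any case, the hypothesis $t > gt^*$ is precisely the lever that closes the gap, and the proof concludes by combining with Lemma~\ref{lemma:path_lower_bound}'s bound $t \cdot d_M(a_i)$ to derive $t \le |I| t^* \le g t^*$, contradicting $t > gt^*$ and thereby proving Lemma~\ref{lemma:t_gt}.
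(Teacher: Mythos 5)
Your setup is right up to the naive estimate: you correctly decompose the path, observe that shortest-first processing gives $d_M(a_j)\le d_M(a_i)$ for $j\in I\setminus\{i\}$, and arrive at the bound $\sum_{j\in I}t^*d_M(a_j)+\sum_{j\neq i}d_M(a_j)\le\bigl(|I|t^*+|I|-1\bigr)d_M(a_i)$. The paper arrives at exactly this bound too and explicitly flags it as only good enough for an $(f,2g)$ result. Where your proof breaks down is in closing the factor-of-$2$ gap. Your proposed lever — rewriting $d_M(a_j)\le\tfrac{1}{2t^*}t\,d_M(a_j)$ using $t>2t^*$ — points the inequality the wrong way: substituting it back and combining with Lemma~\ref{lemma:path_lower_bound} gives $t\bigl(1-\tfrac{|I|-1}{2t^*}\bigr)\le|I|t^*$, which yields something \emph{larger} than $|I|t^*$ whenever $|I|\ge 2$, so you cannot conclude $t\le gt^*$. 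The "amortize $a_j$ against its subpath" and "charge against a different index" remarks are not developed into an actual estimate.

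The idea you are missing is that the path from Lemma~\ref{lemma:path} uses only the edges of $G\cap\delta_{G^*}(a_j)$, not all of $\delta_{G^*}(a_j)$: the $S^*$-edges lying on $\delta_{G^*}(a_j)$ are absent from the path, so their total length $\total(S^*\cap\delta_{G^*}(a_j))$ can be \emph{subtracted} from the naive bound. The paper splits into two cases. If some $j\in I$ has $\total(S^*\cap\delta_{G^*}(a_j))<(1-\tfrac{1}{|I|})d_M(a_j)$, then each such $s\in S^*\cap\delta_{G^*}(a_j)$ is strictly shorter than $a_j$, hence was already $t$-spanned in $G_{j-1}$ when $a_j$ was greedily chosen; replacing each $s$ by a $t$-short detour in $G_{j-1}$ and comparing with $d_{G_{j-1}}(a_j)>t\,d_M(a_j)$ yields $t<|I|t^*\le gt^*$, contradicting the hypothesis. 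Otherwise, $\total(S^*\cap\delta_{G^*}(a_j))\ge(1-\tfrac{1}{|I|})d_M(a_j)$ for every $j\in I$, and subtracting these savings from the naive bound cancels the $\sum_{j\neq i}d_M(a_j)$ overhead exactly, giving $L\le|I|t^*d_M(a_i)$. This case split — and in particular recognising that the omitted $S^*$-edges provide the needed slack, and that when they do not, the shortest-first greedy property forces a contradiction — is the content your proposal does not supply.
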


\begin{proof}
Given a set of edges~$E$, let~$\total(E)$ denote the total sum of edge lengths in~$E$. Recall that the path in Lemma~\ref{lemma:path} only uses edges in~$\{G \cap \delta_{G^*}(a_j): j \in I\} \cup \{a_j: j \in I \setminus \{i\}\}$. A na\"ive approach to prove the lemma is to bound~$\total(\{\delta_{G^*}(a_j): j \in I\} \cup \{a_j: j \in I \setminus \{i\}\})$. Note that~$G$ is removed from the first set of braces. We have 
\[
\begin{array}{rcl}
\total(\{\delta_{G^*}(a_j): j \in I\}) &\leq& \sum_{j \in I} t^* \cdot d_M(a_j) \leq |I| \cdot t^* \cdot d_M(a_i),\\
\total(\{a_j: j \in I\setminus \{i\}\}) &=& \sum_{j \in I\setminus \{i\}} d_M(a_j) \leq (|I|-1) \cdot d_M(a_i).\\
\end{array}
\]
Therefore, the total length of the path is at most~$(|I| \cdot t^* + |I| - 1) \cdot d_M(a_i) < |I| \cdot (t^* + 1) \cdot d_M(a_i)$. Since $(t^* + 1) \leq 2t^*$, we have proven Lemma~\ref{lemma:path_upper_bound} up to a factor of~$2$. This analysis would already yield an~$(f,2g)$-bicriteria approximation. However, to shave off the factor of~2 and obtain a tight analysis, we need a more sophisticated argument.

We strengthen our upper bound by re-introducing~$G$ back into the first set of braces, in other words, by bounding~$\total(\{G \cap \delta_{G^*}(a_j)\})$. Since~$G = G^* \setminus S^*$, we write
\[
\total(\{G \cap \delta_{G^*}(a_j)\}) = \total(\{\delta_{G^*}(a_j)\}) - \total(\{S^* \cap \delta_{G^*}(a_j)\}).
\]
We have two cases, depending on the size of~$\total(\{S^* \cap \delta_{G^*}(a_j)\})$.

\textbf{Case 1.}~$\total(\{S^* \cap \delta_{G^*}(a_j)\}) < (1 - \frac {1}{|I|}) \cdot d_M(a_j)$ for some~$j \in I$. Then for every~$s \in \{S^* \cap \delta_{G^*}(a_j)\}$, it holds that~$d_M(s) < d_M(a_i)$. Therefore,~$d_{G_{j-1}}(s) \leq t \cdot d_M(s)$, since~$a_j$ is the shortest edge in~$G_{j-1}$ satisfying~$d_{G_{j-1}}(a_j) > t \cdot d_M(a_j)$. Let the endpoints of~$a_j$ be~$a_j(0)$ and~$a_j(1)$. Let the edges of~$\delta_{G^*}(a_j) \cap S^*$ be~$s_1,\ldots,s_m$, and let the endpoints of~$s_i$ be~$w_{2i-1}$ and~$w_{2i}$. Assume without loss of generality that the endpoints~$w_1,\ldots,w_{2m}$ are in sorted order along the path~$\delta_{G^*}(a_j)$. Then,
\[
\begin{array}{rcl}
d_{G_{j-1}}(a_j) 
&\leq& d_{G_{j-1}}(a_j(0),w_1) + \sum_{i=1}^m d_{G_{j-1}}(w_{2i-1},w_{2i}) + \sum_{i=1}^{m-1} d_{G_{j-1}}(w_{2i},w_{2i+1}) \\&& \quad + d_{G_{j-1}}(w_{2m},a_j(1))
\\
&=& d_{G_{j-1}}(a_j(0),w_1) + \sum_{i=1}^m d_{G_{j-1}}(s_i) + \sum_{i=1}^{m-1} d_{G_{j-1}}(w_{2i},w_{2i+1}) \\&& \quad + d_{G_{j-1}}(w_{2m},a_j(1))
\\
&\leq& d_{G_{j-1}}(a_j(0),w_1) + \sum_{i=1}^{m-1} d_{G_{j-1}}(w_{2i},w_{2i+1}) + d_{G_{j-1}}(w_{2m},a_j(1)) \\&& \quad + \sum_{i=1}^m t \cdot d_M(s_i)
\\
&\leq& d_{G_{}}(a_j(0),w_1) + \sum_{i=1}^{m-1} d_{G_{}}(w_{2i},w_{2i+1}) + d_{G_{}}(w_{2m},a_j(1)) \\&& \quad + \sum_{i=1}^m t \cdot d_M(s_i)
\\
&=& d_{G^*}(a_j(0),w_1) + \sum_{i=1}^{m-1} d_{G^*}(w_{2i},w_{2i+1}) + d_{G^*}(w_{2m},a_j(1)) \\&& \quad + \sum_{i=1}^m t \cdot d_M(s_i)
\\
&<& d_{G^*}(a_j) + \sum_{i=1}^m t \cdot d_M(s_i),
\\
\end{array}
\]
where the first line uses the triangle inequality, the second line uses~$s_i = w_{2i-1}w_{2i}$, the third line uses~$d_{G_{j-1}}(s) \leq t \cdot d_M(s)$, the fourth line uses~$G \subset G_{i-1}$, the fifth line uses that all the subpaths no longer use edges in~$S^*$, and the sixth line uses that all edges are a subset of the edges in~$\delta_G^*(a_j)$. Therefore,
\[
\begin{array}{rcl}
t \cdot d_M(a_j) < d_{G_{j-1}}(a_j) 
&\leq& d_{G^*}(a_j) + \sum_{i=1}^m t \cdot d_M(s_i) 
\\
&\leq& t^* \cdot d_M(a_j) + t \cdot \total(\{S^* \cap \delta_{G^*}(a_j)\})
\\
&=& t^* \cdot d_M(a_j) + t \cdot (1 - \frac {1}{|I|}) \cdot d_M(a_j).
\end{array}
\]
Simplifying, we get $t < t^* + t - \frac t {|I|}$, which implies $t < |I| \cdot t^* = gt^*$. But this contradicts $t > gt^*$ in the lemma statement. Therefore, only Case 2 remains.

\textbf{Case 2.}~$\total(\{S^* \cap \delta_{G^*}(a_j)\}) \geq (1 - \frac {1}{|I|}) \cdot d_M(a_j)$ for all~$j \in I$. Let $L$ be the length of the path in Lemma~\ref{lemma:path}. Then,
\[
\begin{array}{rcl}
L &\leq& \total(\{G \cap \delta_{G^*}(a_j): j \in I\}) + \total(\{a_j: j \in I \setminus \{i\}\})
\\
&=& \total(\{\delta_{G^*}(a_j):j \in I\}) - \total(\{S^* \cap \delta_{G^*}(a_j): j \in I\}) + \sum_{j \in I\setminus \{i\}} d_M(a_j)
\\
&\leq& \sum_{j \in I} t^* \cdot d_M(a_j) - \sum_{j \in I} (1 - \frac {1}{|I|}) \cdot d_M(a_j) + \sum_{j \in I\setminus \{i\}} d_M(a_j)
\\
&=& \sum_{j \in I} t^* \cdot d_M(a_j) - (1-\frac 1 {|I|}) \cdot d_M(a_i) + \sum_{j \in I\setminus \{i\}} \frac 1 {|I|} \cdot d_M(a_j)
\\
&\leq& |I| \cdot t^* \cdot d_M(a_i) - (1-\frac 1 {|I|}) \cdot d_M(a_i) + (\frac {|I| - 1} {|I|}) \cdot d_M(a_i) 
\\
&=& |I| \cdot t^* \cdot d_M(a_i),
\end{array}
\]
where the first line uses Lemma~\ref{lemma:path}, the second line uses~$G = G^* \setminus S^*$, the third line uses the assumption from the case distinction, and fourth, fifth and sixth lines simplify the expression. Therefore,~$L \leq |I| \cdot t^* \cdot d_M(a_i)$, as required.
\end{proof}

Combining Lemmas~\ref{lemma:path}-\ref{lemma:path_upper_bound}, we obtain Lemma~\ref{lemma:t_gt}, which we will restate for convenience.

\tgt*

Finally, we use Lemma~\ref{lemma:t_gt} to prove Theorem~\ref{theorem:main_greedy_upper}. The idea is to combine the sparsity bound in the case where the greedy construction halts after adding at most~$fk$ edges, with the dilation bound~$t \leq g t^*$ in the case where the greedy construction halts after adding at least~$fk+1$ edges.

\maingreedyupper*

\begin{proof}
First, we describe the decision algorithm. Given any~$t \in \mathbb R$, the decision algorithm is to construct the greedy~$t$-spanner as described in Definition~\ref{definition:greedy_t_spanner}. If at most~$fk$ edges are added, then we continue searching over dilation values that are less than~$t$. If at least~$fk+1$ edges are added, then we continue searching over dilation values that are greater than~$t$.

Second, we perform a binary search to obtain an~$(f,(1+\delta)g)$-bicriteria approximation for Problem~\ref{problem:bicriteria}. Given a set of vertices, Gudmundsson and Wong~\cite{DBLP:journals/talg/GudmundssonW22} show how to (implicitly) binary search a set of~$O(n^4)$ critical values, so that the dilation of any graph with those vertices will be within a factor of~$O(n)$ of one of the critical values. We refine the search to a multiplicative~$(1+\delta)$-grid. As a result, we obtain a~$t \in \mathbb R$ where a greedy $t$-spanner adds at least~$fk+1$ edges, but a greedy $(1+\delta)t$-spanner adds at most~$fk$ edges. By Lemma~\ref{lemma:t_gt}, we have~$t \leq gt^*$, so~$(1+\delta) t \leq (1+\delta) gt^*$. The greedy~$(1+\delta)t$-spanner adds at most~$fk$ edges to the graph and its dilation is at most~$(1+\delta) gt^*$, so we have an~$(f,(1+\delta)g)$-bicriteria-approximation.

Third, we analyse the running time. The running time of the decision algorithm is~$O(n^3)$~\cite{DBLP:journals/talg/GudmundssonW22}. We perform the binary search by first calling the decider~$O(\log n)$ times on the critical values, and an additional~$O(\log \frac 1 \delta)$ times on the multiplicative~$(1+\delta)$-grid. 
\end{proof}

\section{Greedy analysis is tight}
\label{section:greedy_lower}

In this section, we will prove Theorem~\ref{theorem:main_greedy_lower}. We restate the theorem for convenience.

\maingreedylower*

Recall that the Erd\H{o}s girth conjecture~\cite{erdos1965some} states that, for all positive integers~$n$ and~$r$, there exists a graph~$H$ with~$n$ vertices, $m= \Omega(n^{1+1/r})$ edges, and girth~$2r+2$. Recall that the girth of a graph is the length of its shortest cycle. Note that Theorem~\ref{theorem:main_greedy_lower} does not contradict Theorem~\ref{theorem:main_greedy_upper}, as the constant in~$\Omega(k^{1/r})$ is less than~$2 \sqrt[r]{2}$. One would need to resolve the Erd\H{o}s girth conjecture to determine the precise constant.

We summarise our approach. We construct a graph~$G$ so that its girth graph is~$H$. Using the properties of the girth graph, we show that the greedy spanner gives an $(f,g)$-bicriteria approximation where~$f = \Omega(k^{1/r})$ and $g=(1+\delta)(2r+1)$. Our result shows that constructing the girth graph is essentially the ``correct'' way to analyse the greedy spanner, up to constant factors, since lower bounds on the girth of~$H$ directly translates to lower bounds on the dilation factor of the greedy algorithm for Problem~\ref{problem:bicriteria}. 

We divide our proof of Theorem~\ref{theorem:main_greedy_lower} into six parts. First, we define the underlying metric space~$M$. Second, we define the graph~$G$. Third, we define our Problem~\ref{problem:bicriteria} instance. Fourth, we upper bound~$t^*$ in our Problem~\ref{problem:bicriteria} instance. Fifth, we show that if~$t \leq gt^*$ in our Problem~\ref{problem:bicriteria} instance, then the greedy~\mbox{$t$-spanner} adds at least~$fk+1$ edges to~$G$. Sixth, we analyse the algorithm in Theorem~\ref{theorem:main_greedy_upper}.

\textbf{Part 1.} We define the underlying metric space~$M$. Assume that the vertices of the girth graph~$H$ are~$V(H) = \{w_1,\ldots,w_n\}$ and its edges are~$E(H) = \{e_1,\ldots,e_m\}$, where~$m = \Omega(n^{1+1/r})$. The vertices of~$M$ are~$V(M) = \{u_{i,j}:1 \leq i \leq n, 0 \leq j \leq m\}$. For an example of the vertices~$u_{i,j}$ where~$1 \leq i \leq 4$, and~$0 \leq j \leq 5$, see Figure~\ref{fig:4_greedy_lower_bound_01}. 

\begin{figure}[ht]
    \centering
    \includegraphics[width=\textwidth]{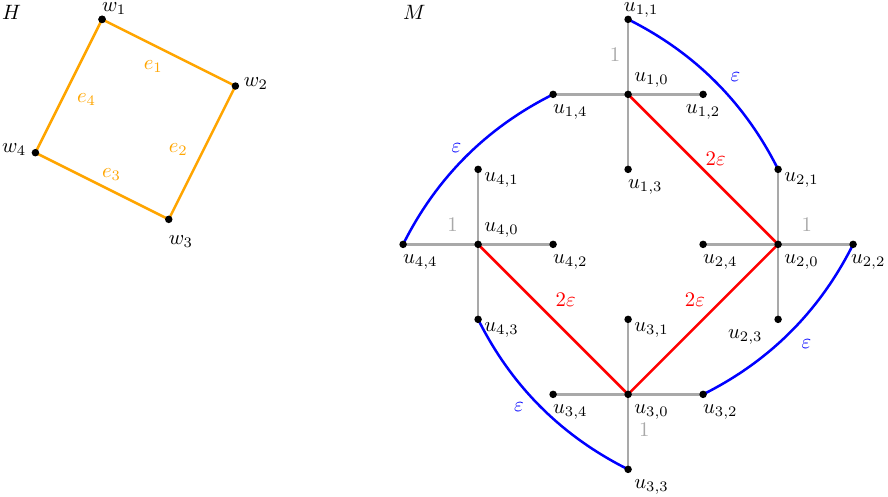}
    \caption{The girth graph~$H$ (orange), the vertices~$u_{i,j} \in V(M)$ (black), the edges~$M_1$ with length~1 (grey), the edges~$M_2$ with length~$2 \varepsilon$ (red), and the edges~$M_3$ with length $\varepsilon$ (blue).}
    \label{fig:4_greedy_lower_bound_01}
\end{figure}

The metric is a graph metric, where distances are shortest path distances in the graph~$M = (V(M),E(M))$. It remains to construct the edges~$E(M)$. We divide the edges~$E(M)$ into three subsets: $M_1$, $M_2$ and~$M_3$. Choose~$\varepsilon = 1/4rn$. Refer to Figure~\ref{fig:4_greedy_lower_bound_01}.

\begin{itemize}
    \item (Grey) Define $M_1 = \{u_{i,0} u_{i,j}: 1 \leq i \leq n,\ 1 \leq j \leq m \}$. Each edge in $M_1$ has length $1$.
    \item (Red) Define $M_2 = \{u_{i,0} u_{i+1,0}: 1 \leq i \leq n-1\}$. Each edge in $M_2$ has length $2 \varepsilon$.
    \item (Blue) Define $M_3 = \{u_{a,j} u_{b,j}: e_j = (w_a, w_b),\ 1 \leq j \leq m\}$. Each edge in $M_3$ has length $\varepsilon$. 
\end{itemize}
This completes the definition of the metric~$M$. 

\textbf{Part 2.} We define the graph~$G$. Let~$V(G) = V(M)$, and set~$E(G) = M_1$. In Figure~\ref{fig:4_greedy_lower_bound_01}, the graph~$G$ only uses the grey edges. 

\textbf{Part 3.} We define our Problem~\ref{problem:bicriteria} instance. Let~$M$ be the metric in part~1,~$G$ be the graph in part~2,~$k=n-1$ and~$f = (m-1)/(n-1)$. We will prove that the dilation parameter is at least~$g = 2r+1$. 

\textbf{Part 4.} We upper bound~$t^*$ in our Problem~\ref{problem:bicriteria} instance. Define~$S' = M_2$ and define~$G' = G \cup S'$. Note that $|S'| = n-1$. If the dilation of $G'$ is $t'$, then $t^* \leq t'$, since  $t^*$ is the minimum dilation if $k=n-1$ edges are added to $G$. It remains to upper bound $t'$. Note that~$d_{G'}(u_{i,0},u_{j,0}) = 2\varepsilon \cdot |i-j|$. If~$(i,a) \neq (j,b)$, then~$d_{G'}(u_{i,a},u_{j,b}) = 2\varepsilon \cdot |i-j| + 2$. Thus, the diameter of~$G'$ is at most~$2n\varepsilon + 2$. The metric distance between any pair of points in~$G'$ is at least~$\varepsilon$. 
So $t' = \max_{u,v \in G'} \frac{d_{G'}(u,v)} {d_M(u,v)} \leq \frac {2n\varepsilon + 2} \varepsilon$. Therefore, $t^* \leq t' \leq (2n\varepsilon + 2)/\varepsilon$, as required.

\textbf{Part 5.} We show that if~$t \leq gt^*$ in our Problem~\ref{problem:bicriteria} instance, then the greedy~\mbox{$t$-spanner} adds at least~$fk+1$ edges to~$G$. We prove by induction that the first~$fk+1$ edges added by the greedy~\mbox{$t$-spanner} are all edges from~$M_3$. Note that all edges in $M_3$ will be considered before any edge in $M_2$. The base case of~$i=0$ is trivially true. Assume the inductive hypothesis that the first~$i$ edges are from~$M_3$, where~$0 \leq i \leq fk$. Consider the graph~$G_i$, and an edge $e \in M_3$ not currently in~$G_i$. Then~$d_M(e) = \varepsilon$. Let~$e = u_{a,j} u_{b,j}$, where~$a \neq b$. Next, we compute~$d_{G_i}(e)$ by considering the shortest path between~$u_{a,j}$ and~$u_{b,j}$ in~$G_i$. If there is no shortest path, then~$d_{G_i}(e) = \infty$, and by Definition~\ref{definition:greedy_t_spanner}, the greedy~\mbox{$t$-spanner} would add either~$e$ or another edge with length~$\varepsilon$ to~$G_i$ to obtain~$G_{i+1}$. Otherwise, the shortest path must use edges in~$M_3$, since the edges in~$M_1$ alone cannot connect~$u_{a,j}$ to~$u_{b,j}$, since~$a \neq b$. Let the edges of~$M_3$ along the shortest path from~$u_{a,j}$ to~$u_{b,j}$, in order, be~$\{(u_{a_1,j_1},u_{b_1,j_1}), \ldots, (u_{a_d,j_d},u_{b_d,j_d})\}$ for some~$1 \leq d \leq m-1$. Note for all $1 \leq c \leq d$, each of the vertices~$u_{a_c,j_c},u_{b_c,j_c} \in V(G)$ are distinct, since no edges in~$M_3$ share an endpoint. Consider the path from~$u_{a,j}$ to~$u_{a_1,j_1}$. All edges along this path are in~$M_1$, and there are at least two edges, since~$j \neq 0$ and~$j_1 \neq 0$. Therefore,~$d_{G_i}(u_{a,j},u_{a_1,j_1}) \geq 2$, and~$a = a_1$. Applying the same argument between~$u_{b_d,j_d}$ and~$u_{b,j}$, we get~$d_{G_i}(u_{b_d,j_d}, u_{b,j}) \geq 2$. Applying the same argument between~$u_{b_c,j_c}$ and~$u_{a_{c+1},j_{c+1}}$ for all~$1 \leq c \leq d-1$, we get~$d_{G_i}(u_{b_c,j_c}, u_{a_{c+1},j_{c+1}}) \geq 2$. Summing this all together, the length of the shortest path from~$u_{a,j}$ to~$u_{b,j}$ is at least~$2(d+1) + 2d\varepsilon$. Next, we show~$d \geq 2r+1$. Since there is an edge~$u_{a,j}u_{b,j}$ in~$M_3$, there is an edge between~$w_a$ and~$w_b$ in~$H$. Since there is an edge~$u_{a_c,j_c}u_{b_c,j_c}$ in~$M_3$, there is an edge between~$w_{a_c}$ and~$w_{b_c}$ in~$H$. But there is a path from~$u_{a,j}$ to~$u_{a_1,j_1}$ using edges only in~$M_1$, so~$a = a_1$. Similarly,~$b_1 = a_2$,\ldots,~$b_{d-1} = a_d$, $b_d = b$. So there is a cycle~$w_{a_1},w_{a_2},\ldots,w_{a_d},w_{b_d}$ of length~$d+1$ in the graph~$H$. But the girth of~$H$ is~$2r+2$, so~$d \geq 2r+1$, as claimed. The length of the shortest path from~$u_{a,j}$ to~$u_{b,j}$ is at least~$2(d+1) + 2d\varepsilon$, which is at least~$2(2r+2) + 2(2r+1)\varepsilon$. But now, 
\[
    \begin{array}{rcl}
    d_{G_i}(u_{a,j},u_{b,j}) 
    &\geq& 2(2r+2) + 2(2r+1)\varepsilon 
    \\
    &\geq& (2r+1) \cdot \varepsilon \cdot (2/\varepsilon + \frac 1 {2r+1} \cdot (2/\varepsilon) + 2)
    \\
    &>& g \cdot d_M(u_{a,j},u_{b,j}) \cdot (2/\varepsilon + 2n)
    \\
    &=& gt^* \cdot d_M(u_{a,j},u_{b,j})
    \\ 
    &\geq& t \cdot d_M(u_{a,j},u_{b,j}).
    \end{array}
\]
Therefore, $d_{G_i}(u_{a,j},u_{b,j}) > t \cdot d_M(u_{a,j},u_{b,j})$, so by Definition~\ref{definition:greedy_t_spanner}, the greedy~\mbox{$t$-spanner} would add either~$u_{a,j}u_{b,j}$ or another edge in~$M_3$ to~$G_i$ to obtain~$G_{i+1}$. This completes the induction, so the first~$fk+1$ edges added by the greedy~$t$-spanner are all edges from~$M_3$. As a consequence,~$t \leq g t^*$ implies that the greedy~\mbox{$t$-spanner} adds at least~$fk+1$ edges to~$G$, completing the proof.

\textbf{Part 6.} We analyse the algorithm in Theorem~\ref{theorem:main_greedy_upper}. If at most~$fk$ edges are added, then we continue searching for dilation values less than~$t$, whereas if at least~$fk+1$ edges are added, then we continue searching for dilation values greater than~$t$. Therefore, for all~$t \leq gt^*$, the algorithm in Theorem~\ref{theorem:main_greedy_upper} would continue searching for values greater than~$t$. The dilation value returned by the algorithm is at least~$gt^*$. So Theorem~\ref{theorem:main_greedy_upper} returns an~$(f,(1+\delta)g)$-bicriteria approximation, where~$f = m-1 / n-1 = \Omega(k^{1/r})$, and~$g = 2r+1$, as required.

Finally, putting all six parts together, we obtain Theorem~\ref{theorem:main_greedy_lower}. We also provide a corollary to Theorem~\ref{theorem:main_greedy_lower} which does not assume the Erd\"os girth conjecture.

\begin{corollary}
For all $r \geq 1$, there is a graph class for which the algorithm in Theorem~\ref{theorem:main_greedy_upper} returns an $(f,g)$-bicriteria approximation, where
$$f = \Omega(H(k,r)/k)\quad and \quad g = 2r+1,$$
and $H(k,r)$ is the maximum number of edges in an $k$-node graph with girth $2r+2$.
\end{corollary}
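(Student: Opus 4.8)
The plan is to observe that the Corollary is simply Theorem~\ref{theorem:main_greedy_lower} with the quantitative girth-extremal bound written out explicitly, so the work reduces to re-running Parts 1--6 of the proof of Theorem~\ref{theorem:main_greedy_lower} without invoking the Erd\H{o}s girth conjecture. First I would fix integers $n$ and $r$, and instead of quoting the conjectured graph, let $H$ be \emph{any} $n$-vertex graph with girth $2r+2$ achieving the maximum number of edges $m = H(n,r)$ among such graphs; such a graph exists unconditionally since there are finitely many $n$-vertex graphs. (If no $n$-vertex graph of girth exactly $2r+2$ exists for small $n$, the statement is vacuous for that $n$, so we may assume $n$ is large enough that $H(n,r) \geq n$, which is the regime of interest.) Everything in Parts 1--4 used only that $H$ has $n$ vertices, $m$ edges, and that $m \geq n$ so that the girth graph contains a cycle; none of it used the asymptotic edge bound. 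Part 5 used \emph{only} that the girth of $H$ is $2r+2$, which holds by construction. So Parts 1--5 go through verbatim with this choice of $H$, yielding that for every $t \leq g t^*$ with $g = 2r+1$, the greedy $t$-spanner adds at least $fk+1$ edges, where $k = n-1$ and $f = (m-1)/(n-1) = (H(n,r)-1)/(n-1)$.

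Next I would run Part 6 unchanged: since the decision algorithm of Theorem~\ref{theorem:main_greedy_upper} continues searching upward whenever at least $fk+1$ edges are added, the returned dilation value is at least $g t^* = (2r+1)t^*$, while the greedy $(1+\delta)t$-spanner adds at most $fk$ edges. Hence the algorithm in Theorem~\ref{theorem:main_greedy_upper}, applied to this instance, is an $(f, (1+\delta)g)$-bicriteria approximation with $f = (H(n,r)-1)/(n-1)$ and $g = 2r+1$. The last step is a cosmetic translation: with $k = n-1$ we have $n = k+1$, so $f = (H(k+1,r)-1)/k = \Theta(H(k+1,r)/k)$, and since the corollary writes $H(k,r)$ rather than $H(k+1,r)$ this is still $\Omega(H(k,r)/k)$ up to the harmless index shift (or one simply parameterizes the construction by $k+1 = n$ from the start and states the bound as $\Omega(H(k+1,r)/k)$; either way the $\Omega(\cdot)$ notation absorbs the difference). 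The $(1+\delta)$ factor in the dilation is dropped in the corollary statement exactly as it is dropped in Theorem~\ref{theorem:main_greedy_lower}.

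I do not expect a genuine obstacle here, since the corollary is strictly weaker bookkeeping than the theorem it follows; the only thing to be careful about is the index shift between $H(k,r)$ and $H(k+1,r)$ and the degenerate small-$n$ cases where no girth-$(2r+2)$ graph exists, both of which are handled by the $\Omega(\cdot)$ notation and by restricting attention to the (vacuously-or-not) large $k$ regime. The substance of the argument is entirely contained in Parts 1--6 of the proof of Theorem~\ref{theorem:main_greedy_lower}, and the corollary is just the observation that replacing ``$\Omega(n^{1+1/r})$ edges, assuming EGC'' by ``$H(n,r)$ edges, unconditionally'' never used the conjecture anywhere in that proof.

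\begin{proof}
Repeat the construction in the proof of Theorem~\ref{theorem:main_greedy_lower}, but in Part~1 let $H$ be, instead of the graph guaranteed by the Erd\H{o}s girth conjecture, any $n$-vertex graph of girth $2r+2$ with the maximum possible number of edges $m = H(n,r)$; such a graph exists unconditionally, and we may assume $n$ is large enough that $m \geq n$. Parts~1--4 used only that $H$ has $n$ vertices and $m$ edges with $m \geq n$, and Part~5 used only that the girth of $H$ equals $2r+2$; neither used the asymptotic bound $m = \Omega(n^{1+1/r})$. Hence Parts~1--5 apply verbatim, and Part~6 then gives that the algorithm of Theorem~\ref{theorem:main_greedy_upper} returns an $(f,g)$-bicriteria approximation with $k = n-1$, $f = (m-1)/(n-1) = (H(n,r)-1)/(n-1)$, and $g = 2r+1$. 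Writing $n = k+1$, we obtain $f = (H(k+1,r)-1)/k = \Omega(H(k,r)/k)$, as required.
\end{proof}
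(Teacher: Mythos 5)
Your proof is correct and matches the paper's (implicit) reasoning: the paper states the corollary without a separate proof, and the intended argument is exactly what you wrote — the Erd\H{o}s girth conjecture is invoked only in Part~1 of the proof of Theorem~\ref{theorem:main_greedy_lower} to supply a graph $H$ with many edges and girth $2r+2$, and replacing it with the unconditional extremal graph on $m = H(n,r)$ edges leaves Parts~1--6 untouched, since Parts~1--4 depend only on $n,m$, Part~5 depends only on the girth being $2r+2$, and the index shift $n=k+1$ is absorbed by the $\Omega(\cdot)$.
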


\section{Set cover reduction}
\label{sec:set_cover_lower}

In this section, we will prove Theorem~\ref{theorem:main_set_cover_lower}. We restate the theorem for convenience.

\mainsetcoverlower*

We reduce from set cover. The set cover instance consists of~$m$ elements~$E = \{e_1,\ldots,e_m\}$ and~$L$ sets~$S = \{S_1,\ldots,S_L\}$, where~$S_\ell \subseteq E$ for all~$1 \leq \ell \leq L$. Let~$n = O(mL)$ denote the complexity of the set cover input.

\begin{fact}
    \label{fact:fpt_hardness}
    Let $(S,E)$ be a set cover instance with complexity $n$, and let $k \in \mathbb N$. Assuming W[1] $\neq$ FPT, for any computable function $h$, there is no $\ h(k) \cdot \poly(n)\ $ time algorithm to distinguish whether
    \begin{itemize}
        \item There is a set cover of $(S,E)$ that has size at most $k$, or
        \item Every set cover of $(S,E)$ has size at least $h(k) \cdot k$.
    \end{itemize}
\end{fact}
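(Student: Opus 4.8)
The plan is to prove Fact~\ref{fact:fpt_hardness}, which is essentially a restatement of the known $W[1]$-hardness of approximating set cover parameterized by solution size. I would not attempt to re-derive this from scratch; instead I would cite the relevant literature (the paper already references~\cite{DBLP:journals/jacm/SLM19}) and explain how their statement yields exactly the dichotomy claimed here. The core result in the literature is that, assuming $W[1] \neq \mathrm{FPT}$, there is no $f(k) \cdot n^{O(1)}$ time algorithm that, given a set cover instance and parameter $k$, distinguishes ``optimum $\leq k$'' from ``optimum $> h(k) \cdot k$'' for any computable $h$. So the proof is really a matter of matching notation and observing that the gap version stated here is literally the inapproximability-as-a-gap-problem formulation.

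Concretely, I would proceed in two short steps. First, I would recall the precise statement from~\cite{DBLP:journals/jacm/SLM19} (or the standard gap formulation of parameterized set cover inapproximability): for every computable function $h$, distinguishing set cover instances with optimum at most $k$ from those with optimum at least $h(k)\cdot k$ is $W[1]$-hard, hence not solvable in $h(k)\cdot\poly(n)$ time unless $W[1] = \mathrm{FPT}$. Second, I would note that the two bullet points in Fact~\ref{fact:fpt_hardness} are exactly the two sides of this gap, with $n = O(mL)$ being the natural measure of input complexity, so the fact follows immediately; any $h(k)\cdot\poly(n)$-time decider for the dichotomy would contradict the cited hardness result.

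The only genuine subtlety — and the step I would flag as the one needing care — is making sure the quantifier order is right: the hardness must hold \emph{for every} computable $h$ (i.e.\ the inapproximability gap is essentially unbounded in terms of the parameter), not merely for some fixed constant or some particular slowly-growing function. The cited result does provide this strong ``totally FPT-inapproximable'' conclusion for set cover, so I would explicitly point to that feature when invoking it. I would also briefly remark that the running-time lower bound $h(k)\cdot\poly(n)$ (rather than just ``no FPT algorithm'') is what the cited work gives, so the statement is faithful. Beyond that, no calculation is required; the proof is a citation plus a one-line identification of the gap problem with the stated dichotomy.

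\begin{proof}
This is a restatement of the parameterized inapproximability of set cover. By~\cite{DBLP:journals/jacm/SLM19}, assuming $W[1] \neq \mathrm{FPT}$, for every computable function $h$ there is no $h(k)\cdot\poly(n)$ time algorithm that, given a set cover instance $(S,E)$ of complexity $n$ together with a parameter $k$, distinguishes the case that $(S,E)$ admits a set cover of size at most $k$ from the case that every set cover of $(S,E)$ has size at least $h(k)\cdot k$. The two items in the statement are precisely these two cases, and $n = O(mL)$ is the complexity of the input, so the claim follows directly.
\end{proof}
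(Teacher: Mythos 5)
Your approach matches the paper's: this fact is a citation, not something to prove from scratch, and you correctly identify~\cite{DBLP:journals/jacm/SLM19} as the source and flag the quantifier order (``for every computable $h$'') as the feature that must be preserved. However, there is one missing step that the paper makes explicit and you gloss over. Theorem~1.3 of~\cite{DBLP:journals/jacm/SLM19} is stated for $k$-Dominating-Set, not for set cover. To obtain Fact~\ref{fact:fpt_hardness} as written one still needs the (standard but necessary) gap-preserving reduction from $k$-Dominating-Set to $k$-Set-Cover, which the paper cites from Downey and Fellows~\cite{downey1995parameterized}. Your write-up asserts that the cited work gives the ``totally FPT-inapproximable'' conclusion \emph{for set cover} directly; strictly speaking it gives it for dominating set, and the transfer to set cover is a separate (short) observation. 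The paper's proof also points to Footnote~8 of~\cite{DBLP:journals/jacm/SLM19}, which addresses exactly the running-time form $h(k)\cdot\poly(n)$ that you correctly single out as needing care; if you are going to flag that subtlety, you should point to where in the source it is resolved rather than assert it holds. So the route is the same, but your proof omits the dominating-set-to-set-cover reduction and is therefore incomplete as stated.
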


\begin{proof}
    The above fact follows from Theorem~1.3 in~\cite{DBLP:journals/jacm/SLM19}, Footnote~8 in~\cite{DBLP:journals/jacm/SLM19}, and the gap-preserving reduction from $k$-dominating-set to $k$-set-cover~\cite{downey1995parameterized}.
\end{proof}

We summarise our approach. We show that every set cover instance can be reduced to a Problem~\ref{problem:bicriteria} instance. If there is an~$(f,g)$-bicriteria approximation for Problem~\ref{problem:bicriteria} that can be computed in polynomial time, where~$f = h(k)$ and~$g = 2-\varepsilon$, then one would be able to approximate set cover to within a factor of~$h(k)$ in polynomial time, contradicting Fact~\ref{fact:fpt_hardness}.

We divide our proof of Theorem~\ref{theorem:main_set_cover_lower} into six parts. First, we define the underlying metric space~$M$. Second, we define the graph~$G$. Third, we define our Problem~\ref{problem:bicriteria} instance. Fourth, we upper bound the dilation, assuming the set cover instance is a \mbox{YES-instance}. Fifth, we lower bound the dilation, assuming the set cover instance is a \mbox{NO-instance}. Sixth, we show that it is \mbox{W[1]-hard} to obtain an~$(f,g)$-bicriteria approximation. 

\textbf{Part 1.} We define the underlying metric space~$M$. For each element~$e_i$ where~$1 \leq i \leq m$, construct~$2k+2$ vertices~$U_i = \{u_{i,1},u_{i,1}',\ldots,u_{i,k+1},u_{i,k+1}'\}$. For each set~$S_\ell$ where~$1 \leq \ell \leq L$, construct three vertices~$V_\ell = \{v_\ell, v_\ell', w_\ell\}$. For an example of the vertices~$u_{i,j}, u_{i,j}', v_\ell, v_\ell', w_\ell$ where~$1 \leq i \leq 3$, $1 \leq j \leq 2$, and~$1 \leq \ell \leq 2$, see Figure~\ref{fig:5_set_cover_lower_01}. Define the vertices~$V(M) = U_1 \cup \ldots U_m \cup V_1 \cup \ldots V_L$. 

\begin{figure}[ht]
    \centering
    \includegraphics{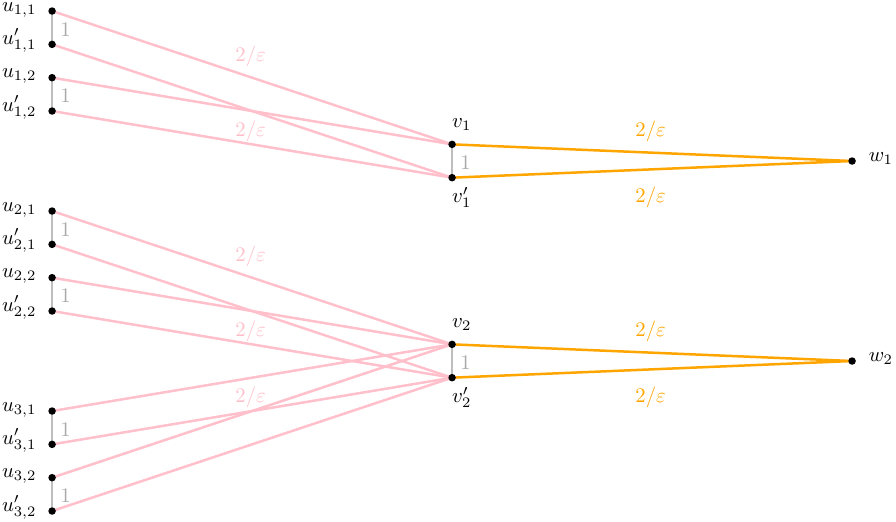}
    \caption{The graph metric~$M$ for a set cover instance with elements $\{e_1,e_2,e_3\}$ and sets~$S_1 = \{e_1\}$ and~$S_2 = \{e_2,e_3\}$. The edges~$M_1 \cup M_2$ are shown in grey, and have length~1. The edges~$M_3 \cup M_4$ are shown in pink, and have length~$2/\varepsilon$. The edges~$M_5 \cup M_6$ are shown in orange, and have length~$2/\varepsilon$.}
    \label{fig:5_set_cover_lower_01}
\end{figure}

The metric distances between any pair of vertices in~$V(M)$ is the shortest path between them in the weighted graph~$M = (V(M),E(M))$. We divide the edges~$E(M)$ into six subsets:~$M_1,\ldots,M_6$. Refer to Figure~\ref{fig:5_set_cover_lower_01}. 

\begin{itemize}
    \item (Grey) Define~$M_1 = \{u_{i,j}u_{i,j}': 1 \leq i \leq m,\ 1 \leq j \leq k+1\}$. Each edge in~$M_1$ has length~1.
    \item (Grey) Define~$M_2 = \{v_{\ell}v_{\ell}': 1 \leq \ell \leq L\}$. Each edge in~$M_2$ has length~1.
    \item (Pink) Define~$M_3 = \{u_{i,j}v_\ell: e_i \in S_\ell,\ 1 \leq j \leq k+1\}$. Each edge in~$M_3$ has length~$2/\varepsilon$.
    \item (Pink) Define~$M_4 = \{u_{i,j}'v_\ell': e_i \in S_\ell,\ 1 \leq j \leq k+1\}$. Each edge in~$M_4$ has length~$2/\varepsilon$.
    \item (Orange) Define~$M_5 = \{v_\ell w_\ell: 1 \leq \ell \leq L\}$. Each edge in~$M_5$ has length~$2/\varepsilon$.
    \item (Orange) Define~$M_6 = \{v_\ell' w_\ell: 1 \leq \ell \leq L\}$. Each edge in~$M_6$ has length~$2/\varepsilon$.
\end{itemize}
This completes the definition of the metric~$M$. 

\textbf{Part 2.} We define the graph~$G$. Define the vertices of~$G$ to be~$V(G) = V(M)$. Define the edges of~$G$ to be~$E(G) = M_3 \cup M_4 \cup M_5 \cup M_6$. This completes the definition of~$G$. 
In Figure~\ref{fig:5_set_cover_lower_01}, the graph~$G$ uses only the pink and orange edges.

\textbf{Part 3.} We define our Problem~\ref{problem:bicriteria} instance. The metric space~$M$, graph~$G$, and parameter~$k$ are defined as above. We will show that if the set cover instance is a \mbox{YES-instance}, then there are~$k$ edges that one can add to the graph so the resulting dilation is~$t^* \leq 4/\varepsilon + 1$. We will show that if the set cover instance is a \mbox{NO-instance}, then there are no~$fk$ edges that one can add to the graph so that the resulting dilation is at most~$gt^*$, where~$f = (1-\alpha) \log n$ and~$g = 2-\varepsilon$. This completes the definition of the Problem~\ref{problem:bicriteria} instance.

\textbf{Part 4.} We show~$t^* \leq 4/\varepsilon + 1$, assuming the set cover instance is a \mbox{YES-instance}. In particular, there exists~$k$ sets~$\{S_{\ell_1},\ldots,S_{\ell_k}\}$ in~$S$ that covers~$E$. Define~$S' = \{v_{\ell_1}v_{\ell_1}',\ldots,v_{\ell_k}v_{\ell_k}'\} \subseteq M_2$ contain the $k$ edges in $M_2$ associated with the $k$ sets in the YES-instance. Define~$G' = G \cup S'$. Let $t'$ be the dilation of $G'$. Since $|S'| = k$, we have $t^* \leq t'$. For all graph metrics~$M$, the maximum dilation is obtained between a pair of points where the shortest path in~$M$ between them only uses a single edge in~$E(M)$. This is because, if the shortest path has multiple edges in~$E(M)$, then the dilation between the endpoints of one of those edges would be at least as large. Therefore, it suffices to consider the dilation between the endpoints of the edges in~$M_1,\ldots,M_6$. However, the dilation between the endpoints of edges in~$M_3,\ldots,M_6$ is~$1$. Therefore, it suffices to consider endpoints of edges in~$M_1$ and~$M_2$. For an edge $v_\ell v_\ell' \in M_2$, if $v_\ell v_\ell' \in S'$ then its dilation is 1. However, if $v_\ell v_\ell'$ is not in $S'$, then we can upper bound the distance with the triangle inequality via $w_\ell$, i.e.
\[
    d_{G^*}(v_\ell,v_\ell') \leq d_{G^*}(v_\ell,w_\ell) + d_{G^*}(w_\ell,v_\ell') \leq 4/\varepsilon.
\]
For~$M_1$, recall that~$\{S_{\ell_1},\ldots,S_{\ell_k}\}$ covers~$E$. So, for each~$u_{i,j}u_{i,j}' \in M_1$ there exists an~$\ell$ satisfying~$u_{i,j}v_\ell \in M_3$, $v_\ell v_\ell' \in S^*$, and~$v_\ell' u_{i,j}' \in M_4$. Therefore,
\[
    d_{G^*}(u_{i,j},u_{i,j}') \leq d_{G^*}(u_{i,j},v_\ell) + d_{G^*}(v_\ell,v_\ell') + d_{G^*}(v_\ell',u_{i,j}') \leq 4/\varepsilon + 1.
\]
Hence, the dilation of~$G^*$ is at most~$4/\varepsilon + 1$, completing the proof that~$t^* \leq 4/\varepsilon + 1$.

\textbf{Part 5.} We show that there are no~$fk$ edges that one can add to~$G$ to obtain a dilation of at most~$gt^*$, assuming the set cover instance is a \mbox{NO-instance}. Recall that~$f = (1-\alpha) \log n$ and~$g = 2 - \varepsilon$. Suppose for the sake of contradiction that there exists~$fk$ edges so that the final dilation is at most~$gt^* = (2-\varepsilon)(4/\varepsilon + 1) < 8/\varepsilon$. Let the set of~$fk$ edges be~$F$. Suppose that one of the edges in~$F$ is from the set~$M_1$. Let the edge be~$u_{i,j}u_{i,j}' \in F \cap M_1$. For all~$e_i \in E$ there exists an~$S_\ell$ so that~$e_i \in S_\ell$. We exchange the edge~$u_{i,j}u_{i,j}'$ with~$v_\ell v_\ell'$. We show that this exchange does not affect the property that the final dilation is at most~$gt^* < 8/\varepsilon$. Similarly to in part~4, it suffices to consider pairs of points that are endpoints of~$M_1 \cup M_2$. For~$M_2$, any path between~$v_p$ and~$v_p'$ that uses the edge~$u_{i,j} u_{i,j}'$ must pass through~$v_\ell$ and~$v_{\ell}'$, so exchanging the edge~$u_{i,j}u_{i,j}'$ with~$v_\ell v_\ell'$ decreases the shortest path distance between all~$v_p v_p' \in M_2$. For $M_1$, any path from~$u_{p,q}$ to~$u_{p,q}'$ that uses the edge~$u_{i,j}u_{i,j}'$ has length at least~$8/\varepsilon+1$, so it cannot be the shortest path between~$u_{p,q}u_{p,q}' \in M_1$. Therefore, after performing exchanges for all~$u_{i,j}u_{i,j}' \in F \cap M_1$, we can assume that~$F \subseteq M_2$. Since~$|F| = k (1-\alpha) \log n$ and the set cover instance is a \mbox{NO-instance}, the set~$F$ cannot be a set cover for the elements~$e_i \in E$. Therefore, there exists an~$i$ so that~$v_\ell v_\ell' \not \in F$ for all~$S_\ell$ where~$e_i \in S_\ell$. For all~$v_\ell v_\ell' \not \in F$, we have~$d_{G \cup F}(v_\ell, v_\ell') \geq d_{G \cup F}(v_\ell, w_\ell) + d_{G \cup F}(w_\ell, v_\ell') = 4/\varepsilon$. Therefore,
\[
    d_{G \cup F}(u_{i,j},u_{i,j}') \geq \max_{\ell: e_i \in S_\ell} \left( d_{G \cup F}(u_{i,j},v_\ell)+d_{G \cup F}(v_\ell,v_\ell')+d_{G \cup F}(v_\ell',u_{i,j}') \right) \geq 8 / \varepsilon,
\]
contradicting the fact that the final dilation is at most~$gt^*$. This completes the proof that there are no~$fk$ edges that one can add to~$G$ to obtain a dilation of at most~$gt^*$, assuming the set cover instance is a \mbox{NO-instance}.

\textbf{Part 6.} We show that it is \mbox{W[1]-hard} to obtain an~$(f,g)$-bicriteria approximation. If there is an~$(f,g)$-bicriteria approximation for Problem~\ref{problem:bicriteria}, one would be able to decide whether~$(i)$ there exists a set of~$k$ edges to add to~$G$ to obtain a dilation of~$t^*$, or~$(ii)$ there are no~$fk$ edges that one can add to~$G$ to obtain a dilation of at most~$gt^*$. Therefore, one would be able to decide whether the set cover instance is a \mbox{YES-instance} or a \mbox{NO-instance}. It follows from Fact~\ref{fact:fpt_hardness} that, assuming FPT $\neq$ W[1], one cannot obtain an~$(f,g)$-bicriteria approximation for~$f = h(k)$ and~$g = 2-\varepsilon$, where $h$ is any computable function and $\varepsilon > 0$.

Finally, putting all six parts together, we obtain Theorem~\ref{theorem:main_set_cover_lower}, as required.

\section{Conclusion}

We provide bicriteria approximation algorithms for the problem of adding~$k$ edges to a graph to minimise its dilation. Our main result is a~$(2 \sqrt[r]{2} \ k^{1/r},2r)$-bicriteria approximation for all~$r \geq 1$, that runs in $O(n^3 \log n)$ time. Our analysis is tight and it is \mbox{W[1]-hard} to obtain a~$(h(k),2-\varepsilon)$-bicriteria approximation for any computable function~$h$ and for any~$\varepsilon > 0$. We provide a simple~$(4k^2 \log n,1)$-bicriteria approximation.

We conclude with directions for future work. Problem~\ref{problem:tree_spanner} remains open. In particular, Obstacle~\ref{obstacle:tree_spanner} asks: is there an~$\varepsilon > 0$ for which there is an~$O(n^{1-\varepsilon})$-approximation algorithm for the minimum dilation spanning tree problem? The linear approximation factor of Problem~\ref{problem:graph_augmentation} cannot be improved unless Obstacle~\ref{obstacle:graph_augmentation} is resolved. An alternative way to circumvent Obstacle~\ref{obstacle:graph_augmentation} is to consider Problem~\ref{problem:graph_augmentation} in the special case of unweighted graph metrics. Finally, Problem~\ref{problem:bicriteria} offers several directions for future work. Can one obtain a trade-off between sparsity and dilation that is better than the greedy \mbox{$t$-spanner} construction? What is the sparsity-dilation trade-off when~$1 < f < 2$? Can the approximation factor or the running time of Theorem~\ref{theorem:main_set_cover_upper} be improved?

\bibliographystyle{plain}
\bibliography{bib}

\end{document}